\newtheorem{remark}{Remark}
\newtheorem{theorem}{Theorem}
\newtheorem{definition}{Definition}
\newtheorem{example}{Example}
\newcommand{\Nats}{\ensuremath{\mathbb{N}}\xspace}
\newcommand{\Reals}{\mathbb{R}}
\newcommand{\RR}{\Reals}
\newcommand{\NN}{\Nats}
\newcommand{\xx}{\boldsymbol{x}}
\newcommand{\uu}{\boldsymbol{u}}
\newcommand*\from{\colon}
\newcommand{\minimize}{\mathtt{minimize}}
\newcommand{\subj}{\mathtt{subj.~to}}
\newcommand{\ffor}{\ \ \textnormal{for}\ \ }
\newcommand{\defeq}{{}\triangleq{}}
\newcommand{\eeq}{~{}={}~}
\newcommand{\lleq}{~{}\leq{}~}
\newcommand{\ggeq}{~{}\geq{}~}
\def\triangleforqed{\hbox{$\lhd$}}
\DeclareRobustCommand{\qedT}{%
	\ifmmode
	\eqno \def\@badmath{$$}
	\let\eqno\relax \let\leqno\relax \let\veqno\relax
	\hbox{\triangleforqed}%
	\else
	\leavevmode\unskip\penalty9999 \hbox{}\nobreak\hfill
	\quad\hbox{\triangleforqed}%
	\fi
}
\title{Runtime Safety and Reach-avoid Prediction of Stochastic Systems via Observation-aware Barrier Functions}
\author{
    Shenghua Feng\textsuperscript{\rm 1,2},
    Jie An\textsuperscript{\rm 1,2},
    Fanjiang Xu\textsuperscript{\rm 1,2}
}
\begin{document}

\maketitle

\begin{abstract}
Stochastic dynamical systems have emerged as fundamental models across numerous application domains, providing powerful mathematical representations for capturing uncertain system behavior. In this paper, we address the problem of runtime safety and reach-avoid probability prediction for discrete-time stochastic systems with online observations, i.e., estimating the probability that the system satisfies a given safety or reach-avoid specification. Unlike traditional approaches that rely solely on offline models, we propose a framework that incorporates real-time observations to dynamically refine probability estimates for safety and reach-avoid events. By introducing observation-aware barrier functions, our method adaptively updates probability bounds as new observations are collected, combining efficient offline computation with online backward iteration. This approach enables rigorous and responsive prediction of safety and reach-avoid probabilities under uncertainty. In addition to the theoretical guarantees, experimental results on benchmark systems demonstrate the practical effectiveness of the proposed method.
\end{abstract}


\section{Introduction}

Stochastic dynamical systems provide robust mathematical frameworks for modeling real-world phenomena under uncertainty. These systems -- including Markov decision processes, probabilistic graphical models, and stochastic hybrid automata -- are pivotal in various fields, such as reinforcement learning, control theory, physics, signal processing, cryptography, finance, biology, and neuroscience~\cite{bertsekas1996stochastic, steele2001stochastic, allen2010introduction}. Ensuring reliability and safety in stochastic systems is a significant challenge, especially as these systems operate in increasingly complex and uncertain environments.

Safety and reach-avoid properties form the cornerstone of trustworthy stochastic system operations. Safety probability estimation quantifies the likelihood that a system trajectory remains outside unsafe regions during execution, whereas reach-avoid estimation assesses the probability of successfully reaching a target region without encountering unsafe states. These estimations are vital for safety-critical control, autonomous systems, robotics, and real-time decision-making~\cite{bertsekas1996stochastic,paul2013stochastic}, where precise risk assessments and robust guarantees are indispensable.

Traditional approaches for safety and reach-avoid predictions, such as stochastic barrier functions, rely on offline computations and predefined uncertainty models to establish probabilistic guarantees~\cite{prajna2004stochastic, feng2020unbounded, lechner2022stability, vzikelic2023learning}. These offline methodologies, however, fail to capitalize on real-time information like system observations or environmental dynamics, often leading to conservative or outdated predictions.

In this paper, we introduce a novel framework for runtime safety and reach-avoid prediction in discrete-time stochastic systems. Our method integrates real-time discrete observations gathered during system execution, enabling dynamic refinement of probability estimates. We utilize observation-aware barrier functions -- extensions of classical barrier certificates that adaptively update probability bounds in response to new observational data -- allowing our approach to rigorously and effectively reflect the evolving state of the system.
The proposed framework adopts a hybrid offline-online computational strategy. The offline phase involves the efficient synthesis of barrier functions through semidefinite programming techniques, addressing the computationally intensive aspects of the prediction process beforehand. This preparation greatly reduces the online computational burden. Subsequently, during system operation, the online phase leverages rapid backward iteration updates whenever new discrete observations become available. These updates dynamically recalibrate the barrier functions, swiftly refining the safety and reach-avoid probability estimates to align with the most current state information.

\paragraph{Contributions.} Our main contributions are as follows:
\begin{itemize}
\item Developing a framework for runtime safety and reach-avoid probability prediction using observation-aware barrier functions to incorporate online observations.
\item Providing theoretical guarantees on predicted probability bounds and introducing efficient runtime prediction algorithms.
\item Demonstrating significant improvements in adaptivity and accuracy over traditional methods through experiments on benchmark systems, underscoring the effectiveness of observation-aware verification in reliable stochastic control.
\end{itemize}

\section{Problem Formulation}

Let $\mathbb{R}$, $\mathbb{R}_{\geq 0}$, and $\mathbb{N}$ be the reals, nonnegative reals, and natural numbers, respectively. 
We consider a discrete-time stochastic dynamical system defined by:
\begin{equation}\label{eq:system}
\xx_{t+1} = f(\xx_t, \uu_t, \omega_t), \quad t \in \mathbb{N},
\end{equation}
where $\xx_t \in \mathcal{X} \subseteq \mathbb{R}^n$ denotes the state of the system at time $t$, $\uu_t \in \mathcal{U} \subseteq \mathbb{R}^m$ is the control input, and $\omega_t \in \mathcal{N} \subseteq \mathbb{R}^q$ represents a random disturbance. The disturbance sequence $\{\omega_t\}_{t \in \mathbb{N}}$ is assumed to be independent and identically distributed (i.i.d.) with a known probability distribution. The control input is determined by a policy $\pi: \mathcal{X} \to \mathcal{U}$, which is a measurable mapping assigning a state $\xx_t$ to the control $\uu_t = \pi(\xx_t)$. The transition function $f$ can be nonlinear, fully characterizing the system dynamics.

Given an initial state $\xx_0$ and a control policy $\pi$, a trajectory of the system is represented by the sequence $(\xx_0, \xx_1, \dots)$ satisfying $\xx_{t+1} = f\left(\xx_t, \pi(\xx_t), \omega_t\right)$, with $\omega_t$ drawn randomly according to its distribution. The resulting trajectory is denoted by $\{X_t^{\xx_0}\}_{t\in \mathbb{N}}$. This induces a probability measure $\mathbb{P}^{\pi}_{\xx_0}$ over all possible trajectories, which we will reference without superscripts or subscripts when the context is clear.

\paragraph{Safety and Reach-Avoid Probability Estimation.}
Consider the discrete-time stochastic system~\eqref{eq:system} with a given control policy $\pi$, an initial set $I \subseteq \mathbb{R}^n$, and an unsafe set $U \subseteq \mathbb{R}^n$. The \emph{safety probability estimation} problem seeks to determine a tight lower bound $\lambda$ on the probability that the system’s trajectory never enters the unsafe set $U$. Formally, the goal is to find a nontrivial $\lambda$ such that
\begin{equation}
\mathbb{P}_{\xx_0}\left(X_n \not\in U \text{ for all } n \in \mathbb{N}\right) \geq \lambda,
\end{equation}
for all initial states $\xx_0 \in I$.
In the \emph{reach-avoid} scenario, an additional target set $T \subseteq \mathbb{R}^n$ is specified. The objective is then to estimate a lower bound $\lambda$ on the probability that the system reaches the target set $T$ while avoiding the unsafe set $U$ up to that point. Specifically, the reach-avoid probability estimation problem is to find a nontrivial $\lambda$ such that
\begin{equation}
\mathbb{P}_{\xx_0}\left(\mathrm{RA(U,T)}\right) \geq \lambda,
\end{equation}
for all initial states $\xx_0 \in I$, where $\mathrm{RA}(U,T)$ denotes the event that the trajectory reaches $T$ before entering $U$:
\[
\mathrm{RA}(U,T) := \left\{\exists n \in \mathbb{N} : (X_n \in T) \wedge \big(\forall i < n : X_i \notin U\big)\right\}.
\]

Traditional methods addressing these problems frequently employ \emph{stochastic barrier functions} to establish safety and reach-avoid guarantees. However, these methods typically rely on offline computations based on prior system knowledge and statistical models of uncertainty, lacking adaptability to real-time observations. Such offline analyses fail to fully exploit valuable runtime data that could significantly refine safety assessments.

In practical scenarios, discrete-time observations become available at runtime, revealing that the system state resides within certain subsets $O_i$ at discrete time instances $t_i$. These runtime observations provide critical updates regarding system behavior and environmental interactions, potentially altering the probabilities associated with safety or reach-avoid conditions.
Consequently, traditional stochastic estimation methods must be adapted to dynamically integrate this observational data. Motivated by this challenge, we introduce the problem of \emph{runtime safety and reach-avoid prediction}, where real-time observations are leveraged to to yield more accurate and responsive safety predictions.

\paragraph{Runtime Safety and Reach-avoid Prediction.}
Consider again the stochastic system \eqref{eq:system} with control policy $\pi$, unsafe set $U$, initial set $I$, and a sequence of observation times $\{t_i\}_{i \in \mathbb{N}}$ with $t_0 = 0 < t_1 < t_2 < \dots$. At each observation time~$t_i$, the system state is observed to belong to a subset $O_i \subseteq \mathbb{R}^n$. The task is to dynamically compute or update the probability that, conditioned on the system being in $O_i$ at each observation time $t_i$, the trajectory remains safe (or satisfies the reach-avoid property) from that point onward. Formally, given any finite observation sequence $\{(t_i, O_i)\}_{i=1}^k$, the objective is to find a nontrivial $\lambda$ such that
\begin{equation}
\mathbb{P}_{\xx_0}\left(X_n \not\in U \text{ for all } n \in \Nats \mid  X_{t_i}\in  O_i \text{ for } i\leq k \right) \geq \lambda
\end{equation}
or, for the reach-avoid case:
\begin{equation}
\mathbb{P}_{\xx_0}\left(\,\mathrm{RA}(U,T) \mid  X_{t_i}\in  O_i \text{ for } i\leq k \,\right) \geq \lambda,
\end{equation}
where probability estimates are updated iteratively as new observations become available at each observation time.
\begin{remark}
For well-posedness, we assume the initial set $I$ and target set $T$ do not intersect the unsafe set $U$, and each observation $O_i$ is disjoint from both $U$ and $T$.
\end{remark}

\begin{remark}
As new observations are incorporated, the conditional safety or reach-avoid probability may increase or decrease, depending on observations. Therefore, the probability does not necessarily change monotonically, reflecting the adaptive and data-driven nature of our framework.
\end{remark}

Throughout the process of incorporating observations, we implicitly assume that the system has not entered the unsafe set $U$ (or reached the target set, in the reach-avoid setting) at any prior time. This assumption is both natural and practically justified, as entering the unsafe region or reaching the target is typically a detectable event in real systems. Once the system is observed to have entered $U$ or reached the target, further prediction or updating of safety or reach-avoid probabilities is no longer necessary. 

\section{Theoretical Results: Observation-aware Barrier Functions}

In this section, we develop a theoretical framework for estimating safety and reach-avoid probabilities conditioned on runtime observations. We show that runtime safety and reach-avoid predication can be systematically characterized using the proposed observation-aware barrier functions. We first address runtime safety estimation in detail, and then briefly discuss the reach-avoid case. Proof sketches for the main results are provided below, with complete proofs given in the appendix.



For runtime safety probability estimation, we leverage Bayes' theorem to reformulate the conditional safety probability in terms of joint and marginal probabilities involving both the system trajectory and the observation sequence. Given a finite sequence of observations $\{(t_i, O_i)\}_{i=1}^k$, the conditional safety probability can be written as
\begin{align*}
    ~&\mathbb{P}_{\xx_0}\left(X_n \not\in U \text{ for all } n \in \Nats \mid  X_{t_i}\in  O_i \text{ for } i\leq k \right) \\
    =~& 1 -  \mathbb{P}_{\xx_0}\left(\exists\,n,\,X_n \in U  \mid  X_{t_i}\in  O_i \text{ for } i\leq k \right)\\
    =~& 1 - \frac{
        \mathbb{P}_{\xx_0}\big(( \exists\,n,\,X_n \in U ) \wedge ( X_{t_i}\in  O_i \,, \forall i\leq k )\big)
    }{
        \mathbb{P}_{\xx_0}\left( X_{t_i}\in  O_i,\, \forall i\leq k \right)
    }.
\end{align*}
This formulation reveals that bounding the conditional safety probability reduces to estimating two terms: the probability of observing both a safety violation and the given observation sequence (the numerator), and the probability of observing the sequence alone (the denominator). If we can compute a lower bound $q$ for the denominator and an upper bound $p$ for the numerator, we immediately obtain
\[
    \mathbb{P}_{\xx_0}\left(X_n \not\in U \text{ for all } n \in \Nats \mid  X_{t_i}\in  O_i \text{ for } i\leq k \right) \geq 1 - \frac{p}{q} \,.
\]
This reduces the original problem to estimating two manageable probability bounds.


To facilitate this, we introduce \emph{observation-aware barrier functions} (OBFs) and \emph{observation-aware safety barrier functions} (OSBFs), which serve as core analytical tools for bounding the denominator and numerator, respectively, in the conditional probability formulation above. These functions extend classical barrier function techniques by explicitly incorporating information from discrete-time observations, enabling more accurate and adaptive safety prediction in the presence of runtime data.

\paragraph{Observation-aware Barrier Functions.} 
We define an \emph{observation-aware barrier function} (OBF) to be a function $B : \Nats_{\leq (t_k+1)} \times \mathcal{X} \to \mathbb{R}$, where $\Nats_{\leq (t_k+1)} := \{0, 1, \ldots, t_k+1\}$ denotes the discrete time indices from the initial time up to one step after the last observation. 
The OBF assigns values to time-state pairs and generalizes classical barrier functions to estimate the probability of observing a specified sequence of events, independent of safety requirements. The construction of OBFs enables lower bounds on observation probabilities to be propagated and updated as new information is gathered during system execution.

Intuitively, the value of an OBF is required to remain within $[0,1]$ everywhere (\emph{Probability condition}), to be lower bounded by $q$ at the initial state (\emph{Initial condition}), and to reach $1$ at the terminal time after the last observation (\emph{Terminal condition}). Along system trajectories, the OBF must not decrease in expectation before and at each observation time, with conditions further refined depending on whether the current state satisfies the observation at time $t$ (\emph{Safe before observation} and \emph{Observation-aware increase}). These properties ensure that, if the system consistently matches all observations while remaining safe, the OBF can only increase, thereby certifying a rigorous lower bound on the probability of realizing the entire observation sequence.



\begin{definition}[Observation barrier functions]\label{def:OBF}
Let $I \subseteq \mathbb{R}^n$ and $U \subseteq \mathbb{R}^n$ be the initial set and unsafe set, respectively, and let $q$ be the probability threshold. Given an observation sequence $\{(t_i, O_i)\}_{i=1}^k$, a function $B \colon \Nats_{\leq (t_k+1)}\times \mathcal{X} \to \mathbb{R}$ is said to be an observation barrier function (OBF) with respect to $I$, $U$, $\{(t_i, O_i)\}$, and $q$ if it satisfies:
\begin{enumerate}
\item Probability condition: $0 \leq B(t, \xx)\leq 1$ for all $(t, \xx) \in \Nats_{\leq (t_k +1)} \times \mathcal{X}$;
\item Initial condition: $B(0, \xx)\geq q$ for all $\xx\in I$;
\item Terminal condition: $B(t_k +1, x) = 1$ for all $x \in \mathcal{X}$;
\item Safe before observation: for all $t \in \{ 0, 1, \dots, t_k\}$,
$$ [\xx \in \mathcal{X}\setminus U]\cdot \mathbb{E}_{\omega \sim d} [ B(t+1, f(\xx, \pi(\xx), \omega_t)) ]\geq  B(t, \xx) \,.$$
\item Observation-aware increase: for all $t \in \{ t_1, t_2, \dots, t_{k}\}$,
$$ [\xx \in  O_t]\cdot \mathbb{E}_{\omega \sim d} [ B(t+1, f(\xx, \pi(\xx), \omega_t)) ]\geq  B(t, \xx) \,.$$
\end{enumerate}
\end{definition}
\begin{remark}
The \emph{safe before observation} condition in Definition~\ref{def:OBF} explicitly reflects the assumption that the system has not entered the unsafe set before each observation time. Consequently, all probabilities in this framework are conditioned on prior safety, aligning with practical scenarios where entering the unsafe region is immediately observable and thus further predictions become unnecessary.
\end{remark}


The following theorem formalizes the probabilistic guarantee provided by the existence of an observation-aware barrier function. Specifically, it asserts that if such a function can be constructed, then the probability of realizing the entire prescribed observation sequence is lower bounded by the threshold $q$.

\begin{restatable}[]{theorem}{obfobservationguarantee}
\label{thm:obf-observation-guarantee}
Suppose there exists an observation-aware barrier function $B(t, x)$ for the system with initial set $I$, unsafe set $U$, observation sequence $\{(t_i, O_i)\}_{i=1}^k$, and threshold $q$. Then, for any $x_0 \in I$,
\[
    \mathbb{P}_{x_0}\left(X_{t_i} \in O_i,~\forall i \leq k\right) \geq q.
\]
\end{restatable}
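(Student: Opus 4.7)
The plan is to introduce the true conditional probability of the remaining observations as an auxiliary function and then show, by backward induction in time, that the OBF is pointwise dominated by it; the initial condition then hands over the theorem immediately. Concretely, I would define
\[
    g(t,\xx) \;:=\; \mathbb{P}\!\left(X_{t_i}\in O_i \text{ for all } i \text{ with } t_i \geq t \,\middle|\, X_t=\xx\right),
\]
so that $g(0,\xx)$ is exactly the probability we want to lower bound. By Markovianity of the dynamics, $g$ satisfies the backward recursion $g(t,\xx) = \mathbb{E}_{\omega}\!\left[g(t+1, f(\xx,\pi(\xx),\omega))\right]$ when $t$ is not an observation time, and $g(t_i,\xx) = \mathbf{1}_{\xx\in O_i}\cdot \mathbb{E}_{\omega}\!\left[g(t_i+1, f(\xx,\pi(\xx),\omega))\right]$ when $t=t_i$. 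The goal becomes showing $B(t,\xx)\le g(t,\xx)$ for every $(t,\xx)$.

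For the base case $t=t_k+1$, the terminal condition gives $B(t_k+1,\xx)=1=g(t_k+1,\xx)$. For the inductive step, I would split on whether $t$ is an observation time. If $t\notin\{t_1,\dots,t_k\}$, the safe-before-observation condition, combined with $B\ge 0$ from the probability condition, forces $B(t,\xx)=0$ whenever $\xx\in U$ (where $g\ge 0$ trivially dominates) and yields $B(t,\xx)\le \mathbb{E}_{\omega}[B(t+1,f(\xx,\pi(\xx),\omega))]$ otherwise; applying the inductive hypothesis pointwise inside the expectation and comparing with the recursion for $g$ closes this case. If $t=t_i$, the observation-aware-increase condition analogously forces $B(t_i,\xx)=0$ off $O_i$ (so $B=g=0$ there) and gives $B(t_i,\xx)\le \mathbb{E}_{\omega}[B(t_i+1,f(\xx,\pi(\xx),\omega))]$ on $O_i$, which again matches the recursion for $g$ up to the inductive hypothesis.

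Once $B\le g$ is established everywhere, the initial condition $B(0,\xx)\ge q$ for $\xx\in I$ yields $g(0,\xx)\ge q$, which is exactly the claimed bound. The main subtlety I expect is interpreting the Iverson-bracket form of conditions~4 and~5 correctly: each such condition is doing double duty, simultaneously enforcing a submartingale-type inequality on the admissible subset and forcing $B$ to vanish off that subset via the non-negativity from condition~1. A secondary point worth verifying carefully is that at observation times \emph{both} conditions 4 and 5 are active; the well-posedness assumption $O_i\cap U=\emptyset$ ensures they are compatible, and condition~5 is the binding one that drives the recursion for $g$ at those steps.
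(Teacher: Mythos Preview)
Your argument is correct and takes a genuinely different route from the paper. The paper proves the theorem by constructing the process
\[
Y_n \;:=\; \prod_{i<n}[X_i\in\mathcal{X}\setminus U]\;\prod_{t_i<n}[X_{t_i}\in O_{t_i}]\cdot B(n,X_n),
\]
shows that conditions~4 and~5 make $\{Y_n\}_{n\le t_k+1}$ a submartingale, and then reads off $q\le \mathbb{E}[Y_0]\le \mathbb{E}[Y_{t_k+1}]$, with the terminal condition identifying $\mathbb{E}[Y_{t_k+1}]$ as the desired probability. Your approach instead compares $B$ pointwise to the dynamic-programming value function $g$ by backward induction. What your route buys is elementarity: no filtration, no martingale vocabulary, and a clean pointwise inequality $B\le g$ that makes transparent why the OBF constructed in Algorithm~2 is tight (it is exactly $g$ restricted by the safety indicators). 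It also sidesteps the paper's appeal to the implicit prior-safety assumption when equating $\mathbb{E}[Y_{t_k+1}]$ with the observation probability, since your $g(0,\xx_0)$ is that probability by definition. What the paper's route buys is uniformity with the harder Theorems~2 and~4, where an unbounded horizon and a genuine stopping time $T=\inf\{n:X_n\in U\}$ enter and the optional stopping theorem is doing real work; the submartingale framing there is not so easily replaced by a finite backward induction.
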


\begin{proof}[Proof sketch]
The main idea is to construct a submartingale based on the OBF, such that its expected value at time $t_k + 1$ exactly equals the probability of realizing the prescribed observation sequence.
Intuitively, a submartingale is a stochastic process whose expected value does not decrease over time; it captures the notion that, as the system evolves, the likelihood of satisfying the observation constraints cannot decrease unexpectedly. This property is crucial for establishing lower bounds on observation probabilities.
Specifically, for the process
\[
Y_n := \prod_{i < n} [X_i \in (\mathcal{X}\setminus U)] \prod_{t_i < n} [X_{t_i} \in O_{t_i}]\cdot  B(n, X_n),
\]
the OBF conditions guarantee that $Y_n$ is a submartingale. By the submartingale property and the OBF initial condition, we have $q \leq B(0, x_0) \leq \mathbb{E}[Y_{t_k+1}]$, which exactly equals the desired observation probability. This establishes the lower bound.
\end{proof}



\paragraph{Observation-aware Safety Barrier Functions}
While the observation-aware barrier function provides a rigorous lower bound on the probability of realizing a specified sequence of observations, runtime safety prediction also requires quantifying the probability of safety violations. To address this, we introduce the notion of an \emph{observation-aware safety barrier function} (OSBF).

\begin{definition}[Observation-aware safety barrier functions]
\label{def:OSBF}
Let $I \subseteq \mathbb{R}^n$ be the initial set, $U \subseteq \mathbb{R}^n$ the unsafe set, and let $p$ be the probability threshold. Given an observation sequence $\{(t_i, O_i)\}_{i=1}^k$, a function $V: \mathbb{N} \times \mathcal{X} \to \mathbb{R}$ is said to be an \emph{observation-aware safety barrier function} (OSBF) with respect to $U$, $I$, $\{(t_i, O_i)\}$, and $p$ if it satisfies:
\begin{enumerate}
    \item Nonnegativity: $V(t, \xx) \geq 0$ for all $(t, \xx) \in \mathbb{N} \times \mathcal{X}$. \label{cond:Nonnegativity}
    \item Initial condition: $V(0, \xx) \leq p$ for each $\xx \in I$. \label{cond:Initial}
    \item Safety condition: $V(t, \xx) \geq 1$ for $t \geq t_k + 1$ and $\xx \in U$.\label{cond:Safety}
    \item Safe before observation: for $t \in \{0, 1, \ldots, t_k\}$, \label{cond:Safe_before_observation}
    \[
        [\xx \in \mathcal{X}\setminus U] \cdot \mathbb{E}_{\omega \sim d}\big[ V(t+1, f(\xx, \pi(\xx), \omega)) \big] \leq V(t, \xx).
    \]
    \item Observation-aware decrease: for $t \in \{t_1, \ldots, t_k\}$, \label{cond:Observation-aware_decrease}
    \[
        [\xx \in  O_t] \cdot \mathbb{E}_{\omega \sim d}\big[ V(t+1, f(\xx, \pi(\xx), \omega)) \big] \leq V(t, \xx).
    \]
    \item Expected decrease after last observation: for $t \geq t_k+1$, and $\xx \in \mathcal{X}\setminus U$, \label{cond:decrease_after_last_observation}
    \[
        \mathbb{E}_{\omega \sim d}\big[ V(t+1, f(\xx, \pi(\xx), \omega)) \big] \leq V(t, \xx).
    \]
\end{enumerate}
\end{definition}
\begin{remark}
Compared to the OBF, the OSBF includes an additional condition---the expected decrease after the last observation (cond~\ref{cond:decrease_after_last_observation}). This condition is essential for certifying the system's safety beyond the observation horizon.
\end{remark}

Intuitively, an OSBF must be nonnegative (\emph{Nonnegativity}), bounded above at the initial state (\emph{Initial condition}), and bounded below on unsafe states after the last observation (\emph{Safety condition}). Its value must not increase in expectation before and at each observation time, with specific conditions depending on whether the state matches the most recent observation (\emph{Safe before observation} and \emph{Observation-aware decrease}). After the final observation, a standard expected decrease applies (\emph{Expected decrease}). Collectively, these properties ensure that if the system stays safe and satisfies all observations, the OSBF cannot increase unexpectedly, and any safety violation is flagged when the function exceeds a certain threshold. The following theorem formalizes this guarantee.

\begin{restatable}[]{theorem}{osbfsafetyguarantee}
    \label{thm:osbf-safety-guarantee}
    Suppose there exists an OSBF $V(t,\xx)$ for the system with initial set $I$, unsafe set $U$, observation sequence $\{(t_i, O_i)\}_{i=1}^k$, and threshold $p$. Then, for any $\xx_0 \in I$,
    \[
        \mathbb{P}_{\xx_0}\big(( \exists\,n,\,X_n \in U ) \wedge ( X_{t_i}\in  O_i \,, \forall i\leq k )\big) \leq p.
    \]
\end{restatable}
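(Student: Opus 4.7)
The plan is to mirror the submartingale argument used for \Cref{thm:obf-observation-guarantee}, but with the OSBF $V$ now yielding a \emph{supermartingale} whose tail expectation upper-bounds the probability of the unsafe-and-observed event. Specifically, combining $V$ with prefix indicators that track prior safety up to time $t_k$ and prior satisfaction of observations, I would define
\[
Z_n \defeq \prod_{i < n,\, i \leq t_k} \mathbb{1}_{X_i \notin U} \cdot \prod_{t_j < n} \mathbb{1}_{X_{t_j} \in O_j} \cdot V(n, X_n),
\]
together with the stopping time $\tau \defeq \inf\{n \geq t_k + 1 : X_n \in U\}$, adopting the convention $\tau = \infty$ when no such $n$ exists.

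First I would verify that $(Z_{n \wedge \tau})$ is a supermartingale with respect to the natural filtration of the trajectory. The verification splits by the type of step: for $n \leq t_k$ at a non-observation time, the \emph{safe before observation} inequality absorbs the newly appended indicator $\mathbb{1}_{X_n \notin U}$; for $n = t_j \leq t_k$, the \emph{observation-aware decrease} inequality handles $\mathbb{1}_{X_{t_j} \in O_j}$, using that $O_j \cap U = \emptyset$ so that the product of the two indicators collapses to the single observation indicator; and for $n \geq t_k + 1$ with $n < \tau$, the \emph{expected decrease after last observation} applies since $X_n \in \mathcal{X} \setminus U$ on $\{n < \tau\}$. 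In every case, one OSBF inequality upgrades $\mathbb{E}[V(n+1, X_{n+1}) \mid \mathcal{F}_n]$ against $V(n, X_n)$ with the appropriate indicator, yielding $\mathbb{E}[Z_{(n+1) \wedge \tau} \mid \mathcal{F}_n] \leq Z_{n \wedge \tau}$.

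Next, the initial condition gives $Z_0 = V(0, X_0) \leq p$, so the supermartingale property provides $\mathbb{E}[Z_{N \wedge \tau}] \leq p$ for every finite $N$. On the event where all $k$ observations hold, the trajectory stays in $\mathcal{X} \setminus U$ on $[0, t_k]$, and $\tau \leq N$, all prefix indicators evaluate to $1$ and the \emph{safety condition} forces $V(\tau, X_\tau) \geq 1$, hence $Z_{N \wedge \tau} \geq \mathbb{1}_{\{\tau \leq N,\text{ all observations hold}\}}$ on that event. Passing $N \to \infty$ via Fatou's lemma then bounds the probability that a post-horizon violation co-occurs with all observations by $p$. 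Under the standing convention of the problem formulation---that entry into $U$ before the current observation is directly detectable and terminates the prediction---this is exactly the event appearing in the theorem.

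The main obstacle is the interplay between the prefix factor $\prod_{i \leq t_k} \mathbb{1}_{X_i \notin U}$, which is necessary during the observation window because the \emph{safe before observation} condition only constrains $V$ multiplied by $\mathbb{1}_{X_n \notin U}$, and the simultaneous need to keep $Z_\tau$ bounded below by $1$ on the bad event. Since this indicator zeroes out any trajectory that visits $U$ before $t_k$, what is rigorously bounded is the probability of a first violation at some $n \geq t_k + 1$; reconciling this with the stated event relies on the remark appearing just after the problem setup.
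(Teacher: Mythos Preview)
Your proposal is correct and follows essentially the same supermartingale approach as the paper: define an indicator-weighted process from $V$, verify the supermartingale property case-by-case using the OSBF conditions, and bound the target probability via the initial value at $\xx_0\in I$. The minor differences---your stopping time restricted to $n\ge t_k+1$ and the use of Fatou's lemma in place of the optional stopping theorem---do not change the essence, and your explicit discussion of the standing assumption (first violation occurring after $t_k$) mirrors exactly how the paper handles the same point (cf.\ the remark in its proof of \cref{thm:obf-observation-guarantee}).
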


\begin{proof}[Proof sketch]
The complete proof is provided in the Appendix. The argument is similar to the previous result: we construct a supermartingale based on the OSBF, whose expected value at the stopping time for entering the unsafe set captures the safety violation probability. The result follows from the optional stopping theorem and the OSBF conditions.
\end{proof}

Together, the OSF and the OSBF provide a compositional approach to lower bounding the conditional safety probability. By separately certifying a bound for the joint probability of safety and observation events (via the OSBF) and a bound for the probability of the observation sequence (via the OBF), we can combine these results through the Bayesian reformulation established earlier. The following theorem formalizes this compositional guarantee:

\begin{restatable}[]{theorem}{conditionalsafety}
\label{thm:conditional-safety}
Suppose the notations and assumptions above hold, and there exist an OSBF $V(t,\xx)$ and an OBF $B(t,\xx)$ for the system with initial set $I$, unsafe set $U$, observation sequence $\{(t_i, O_i)\}_{i=1}^k$, and thresholds $p$ and $q$, respectively. Then, for any $\xx_0 \in I$,
\[
    \mathbb{P}_{\xx_0}\left(X_n \not\in U \text{ for all } n \in \Nats \mid X_{t_i} \in O_i \text{ for } i \leq k \right) \geq 1 - \frac{p}{q}.
\]
\end{restatable}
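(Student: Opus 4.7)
The plan is to reduce the statement to a direct combination of the two preceding theorems via the Bayesian identity that was already displayed in the introductory discussion of the section. Specifically, I would first invoke the standard decomposition
\[
    \mathbb{P}_{\xx_0}\!\left(X_n \not\in U \text{ for all } n \mid X_{t_i} \in O_i \text{ for } i \leq k \right) = 1 - \frac{\mathbb{P}_{\xx_0}\bigl((\exists n,\, X_n \in U) \wedge (X_{t_i} \in O_i,\ \forall i \leq k)\bigr)}{\mathbb{P}_{\xx_0}\bigl(X_{t_i} \in O_i,\ \forall i \leq k\bigr)},
\]
which is just the definition of conditional probability applied to the complement event, valid whenever the conditioning event has positive probability.

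Next, I would bound numerator and denominator separately using the two results already proved. By Theorem \ref{thm:osbf-safety-guarantee} applied to the given OSBF $V(t,\xx)$ with threshold $p$, the numerator is at most $p$. By Theorem \ref{thm:obf-observation-guarantee} applied to the given OBF $B(t,\xx)$ with threshold $q$, the denominator is at least $q$. Combining these monotonically (the numerator is nonnegative and the denominator is strictly positive), one gets $\tfrac{\text{num}}{\text{den}} \leq \tfrac{p}{q}$, and subtracting from $1$ yields the claimed lower bound $1 - p/q$.

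The only delicate point, and the one I expect to spend most care on, is the well-posedness of the conditional probability itself. If the observation sequence had probability zero, the conditional probability on the left-hand side would be undefined; however, Theorem \ref{thm:obf-observation-guarantee} gives $\mathbb{P}_{\xx_0}(X_{t_i} \in O_i, \forall i \leq k) \geq q$, so as soon as the OBF threshold $q$ is positive (which is implicit in calling the bound nontrivial), the conditioning is legitimate and the division $p/q$ makes sense. In the degenerate regime $q = 0$ the bound $1 - p/q$ is itself vacuous, so no additional argument is needed. Beyond this, the proof is essentially algebraic once the two prior theorems are invoked, and requires no further martingale or measure-theoretic construction.
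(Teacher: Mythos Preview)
Your proposal is correct and matches the paper's own proof essentially line for line: the paper simply restates the Bayesian identity and declares the result immediate from Theorems~\ref{thm:obf-observation-guarantee} and~\ref{thm:osbf-safety-guarantee}. Your added remark on well-posedness of the conditioning (requiring $q>0$) is more careful than what the paper writes, but otherwise the argument is the same.
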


\paragraph{The Reach-avoid Case.}
The \emph{reach-avoid} scenario can be treated analogously to the safety case, with the additional assumption that the system will eventually enter $U \cup T$ with probability one. Under this assumption, the conditional reach-avoid probability can be expressed as
\begin{align*}
    ~&\mathbb{P}_{\xx_0}\left(\,\mathrm{RA}(U,T) \mid  X_{t_i}\in  O_i \text{ for } i\leq k \,\right)  \\
    =~& 1 - \mathbb{P}_{\xx_0}\left(\,\mathrm{RA}(T,U) \mid  X_{t_i}\in  O_i \text{ for } i\leq k \,\right) \\
    =~& 1 - \frac{
        \mathbb{P}_{\xx_0}\big( \mathrm{RA}(T,U) \wedge  ( X_{t_i}\in  O_i \,, \forall i\leq k )
        \big)
    }{
        \mathbb{P}_{\xx_0}\left( X_{t_i}\in  O_i,\, \forall i\leq k \right)
    }.
\end{align*}
The second equality holds because the system is guaranteed to enter either $U$ or $T$ eventually. Here, $\mathrm{RA}(T,U)$ denotes the event that the trajectory reaches $U$ before entering $T$:
\[
\mathrm{RA}(T,U) := \left\{ \exists\,n\,:\, X_n \in U,\,\, \forall j<n,\, X_j \not\in T \right\}.
\]
Therefore, the reach-avoid probability estimation reduces to obtaining a lower bound for the denominator and an upper bound for the numerator. Notably, the denominator coincides with that in the safety case, so only the numerator requires a new estimate.

\begin{remark}
Rather than directly lower bounding the conditional reach-avoid probability, we focus on upper bounding $\mathbb{P}_{\xx_0}\left(\mathrm{RA}(T,U) \mid  X_{t_i}\in  O_i,\,\forall i\leq k \right)$. This approach is preferable because upper bounds are typically easier to obtain, and the denominator estimate from the safety case can be reused.
\end{remark}

To upper bound the numerator above, we introduce the notion of an \emph{observation-aware reach-avoid barrier function} (ORBF), which generalizes the observation-aware safety barrier function to the reach-avoid setting.

\begin{definition}[Observation-aware reach-avoid barrier functions]
\label{def:ORBF}
Let $I \subseteq \mathbb{R}^n$ be the initial set, $U \subseteq \mathbb{R}^n$ the unsafe set, $T$ the target set, and let $p$ be the probability threshold. Given an observation sequence $\{(t_i, O_i)\}_{i=1}^k$, a function $V: \mathbb{N} \times \mathcal{X} \to \mathbb{R}$ is said to be an \emph{observation-aware reach-avoid barrier function} (ORBF) with respect to $U$, $T$, $I$, $\{(t_i, O_i)\}$, and $p$ if it satisfies conditions~\ref{cond:Nonnegativity}--\ref{cond:Observation-aware_decrease} in Definition~\ref{def:OSBF}, and further satisfies:
\begin{enumerate}
    \item[6.] Expected decrease after last observation (RA case): for $t \geq t_k+1$ and $\xx \in \mathcal{X} \setminus (U\cup T)$,\
    \[
        \mathbb{E}_{\omega \sim d}\big[ V(t+1, f(\xx, \pi(\xx), \omega)) \big] \leq V(t, \xx).
    \]
\end{enumerate}
\end{definition}

\begin{remark}
The ORBF differs from the OSBF in that its final expected decrease condition applies to all $\xx \in \mathcal{X} \setminus (U \cup T)$, reflecting that safety requirements cease once the target set $T$ is reached.
\end{remark}

The ORBF provides a rigorous characterization of the barrier property needed to upper bound the probability of violating the reach-avoid objective, given the observation constraints. The main probabilistic guarantees are formalized as follows:
\begin{restatable}[]{theorem}{ORBFguarantee}
\label{thm:ORBF_guarantee}
If there exists an ORBF $V(t, \xx)$ for the system with initial set $I$, unsafe set $U$, target set $T$, observation sequence $\{(t_i, O_i)\}_{i=1}^k$, and threshold $p$, then for any $\xx_0 \in I$,
\[
    \mathbb{P}_{\xx_0}\big( \mathrm{RA}(T,\,U) \wedge (X_{t_i} \in O_i,\ \forall i\leq k) \big) \leq p.
\]
\end{restatable}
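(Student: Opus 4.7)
The plan is to mirror the supermartingale construction from the proof of Theorem~\ref{thm:osbf-safety-guarantee}, with the stopping region enlarged from $U$ to $U\cup T$ so that the ORBF-specific condition~6 applies after the observation horizon while the safety condition still certifies $X_\tau\in U$.

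First, I would introduce the stopping time $\tau := \inf\{n : X_n\in U\cup T\}$ and define the auxiliary process
\[
Z_n := V(n,X_n)\cdot \prod_{i<n}\mathbf{1}[X_i\notin U\cup T]\cdot \prod_{i\,:\,t_i<n}\mathbf{1}[X_{t_i}\in O_i],
\]
which is nonnegative by condition~\ref{cond:Nonnegativity}. The key step is to verify that $\{Z_n\}_{n\geq 0}$ is a supermartingale with respect to the natural filtration via a case split on $n$: (i) if $n\in\{t_1,\dots,t_k\}$, apply condition~\ref{cond:Observation-aware_decrease} and use $O_n\cap(U\cup T)=\emptyset$, so that $\mathbf{1}[X_n\in O_n]\leq\mathbf{1}[X_n\notin U\cup T]$; (ii) if $n\in\{0,\dots,t_k\}\setminus\{t_1,\dots,t_k\}$, apply condition~\ref{cond:Safe_before_observation} combined with the inclusion $\mathbf{1}[X_n\notin U\cup T]\leq\mathbf{1}[X_n\notin U]$ and nonnegativity of $V$; (iii) if $n\geq t_k+1$, apply the ORBF-specific condition~6, whose expected-decrease inequality holds precisely on $\mathcal{X}\setminus(U\cup T)$.

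Next, I would apply the optional stopping theorem to $\{Z_{\tau\wedge N}\}_{N\geq 0}$ to obtain $\mathbb{E}[Z_{\tau\wedge N}]\leq\mathbb{E}[Z_0]=V(0,\xx_0)\leq p$, and then invoke Fatou's lemma to pass to the limit, yielding $\mathbb{E}[Z_\tau\,\mathbf{1}[\tau<\infty]]\leq p$. On the event $\mathrm{RA}(T,U)\cap\{X_{t_i}\in O_i,\,\forall i\leq k\}$, the definition of $\mathrm{RA}(T,U)$ forces $\tau<\infty$ with $X_\tau\in U$, while the disjointness of each $O_i$ from $U\cup T$ together with the implicit framework assumption (the trajectory does not enter $U\cup T$ undetected during the observation phase) ensures $\tau\geq t_k+1$; both indicator products then evaluate to $1$ at time $\tau$ and condition~\ref{cond:Safety} gives $V(\tau,X_\tau)\geq 1$, so $Z_\tau\geq 1$. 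Combining these facts delivers the asserted bound $\mathbb{P}_{\xx_0}(\mathrm{RA}(T,U)\wedge(X_{t_i}\in O_i,\,\forall i\leq k))\leq p$.

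The main obstacle will be case~(ii) of the supermartingale verification: the ORBF conditions for $n\leq t_k$ involve only $\mathbf{1}[\xx\notin U]$ rather than $\mathbf{1}[\xx\notin U\cup T]$, so bridging this gap requires carefully combining the indicator inclusion with nonnegativity of $V$. A secondary subtlety is justifying $\tau\geq t_k+1$ on the conditioning event, which rests on both the $O_i$-disjointness assumptions and the implicit hypothesis recorded after the problem formulation.
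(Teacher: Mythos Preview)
Your proposal is correct and follows essentially the same supermartingale-plus-optional-stopping argument as the paper, with the stopping time taken at first entry to $U\cup T$. The only cosmetic difference is that the paper keeps the indicator $\mathbf{1}[X_i\notin U]$ in the process and relies on the explicit stopping to handle $T$, whereas you fold $\mathbf{1}[X_i\notin U\cup T]$ into $Z_n$ directly---this is what forces your extra case~(ii) argument via the indicator inclusion, but both routes yield the same stopped supermartingale and the same bound.
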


Combining this result with the previously established lower bound on the probability of observing the sequence, we obtain the following conditional reach-avoid probability guarantee:

\begin{restatable}[]{theorem}{conditionalRA}
\label{thm:condtional-RA}
Suppose the notations and assumptions above hold, and the system enters $U \cup T$ with probability one. If there exist an ORBF $V(t, \xx)$ and an OSBF $B(t, \xx)$ for the system with initial set $I$, unsafe set $U$, target set $T$, observation sequence $\{(t_i, O_i)\}_{i=1}^k$, and thresholds $p$ and $q$ respectively, then for any $\xx_0 \in I$,
\[
    \mathbb{P}_{\xx_0}\big( \mathrm{RA}(U, T) \mid X_{t_i} \in O_i\,, \forall i\leq k \big) \geq 1 - \frac{p}{q}.
\]
\end{restatable}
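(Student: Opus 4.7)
The plan is to reduce the claim to a direct combination of the two ingredient bounds already in hand, using the Bayesian reformulation stated at the opening of the reach-avoid subsection. Concretely, I would first invoke the identity
\[
\mathbb{P}_{\xx_0}\!\left(\mathrm{RA}(U,T)\mid X_{t_i}\in O_i,\,\forall i\le k\right) = 1 - \frac{\mathbb{P}_{\xx_0}\!\big(\mathrm{RA}(T,U)\wedge (X_{t_i}\in O_i,\,\forall i\le k)\big)}{\mathbb{P}_{\xx_0}\!\big(X_{t_i}\in O_i,\,\forall i\le k\big)},
\]
which was derived in the paragraphs immediately preceding Theorem~\ref{thm:ORBF_guarantee}. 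This identity is where the almost-sure entry assumption into $U\cup T$ is used: it guarantees that $\mathrm{RA}(U,T)$ and $\mathrm{RA}(T,U)$ partition the sample space up to a null set, so that conditioning yields $\mathbb{P}(\mathrm{RA}(U,T)\mid \cdot)=1-\mathbb{P}(\mathrm{RA}(T,U)\mid \cdot)$. The well-posedness assumption that each $O_i$ is disjoint from both $U$ and $T$ ensures that the observation events are consistent with neither reach-avoid event having already been realized at the observation times.

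Next I would plug in the two probabilistic guarantees established earlier: Theorem~\ref{thm:ORBF_guarantee} applied to the ORBF $V$ yields the upper bound $p$ on the numerator, and Theorem~\ref{thm:obf-observation-guarantee} applied to $B$ (viewed as an OBF for the observation sequence, with threshold $q$) yields the lower bound $q$ on the denominator. Combining these monotonically,
\[
\frac{\mathbb{P}_{\xx_0}\!\big(\mathrm{RA}(T,U)\wedge (X_{t_i}\in O_i,\,\forall i\le k)\big)}{\mathbb{P}_{\xx_0}\!\big(X_{t_i}\in O_i,\,\forall i\le k\big)} \le \frac{p}{q},
\]
and subtracting from $1$ delivers the advertised bound $1-p/q$.

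The only subtle point, rather than a genuine obstacle, is the degenerate case $q=0$: then the denominator bound is vacuous and the conditional probability itself may be undefined. In that case the claimed inequality is trivially interpreted (or one restricts to $q>0$, which is what one actually synthesizes in practice). The essential work is absorbed into the two prior theorems and the earlier justification of the Bayesian decomposition; the present theorem is a composition of those pieces, so the proof is largely a matter of citing them in sequence and being explicit that the reach-avoid partition is used only through the almost-sure hitting hypothesis.
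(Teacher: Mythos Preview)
Your proposal is correct and matches the paper's own proof, which simply cites the Bayesian identity (justified by the almost-sure entry assumption) and declares the result to follow; in fact your write-up is more explicit than the paper's, since you spell out which earlier theorems bound the numerator and denominator and you flag the $q=0$ edge case, which the paper leaves implicit.
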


\begin{remark}
The almost-sure reachability assumption -- that the system eventually enters $U \cup T$ with probability one -- can often be established using standard techniques such as stochastic ranking functions~\cite{chatterjee2016termination,chakarov2013probabilistic}. See also~\cite{majumdar2025sound} for related results on the termination of stochastic systems.
\end{remark}

\section{Algorithm for Runtime Safety and Reach-avoid Prediction}

In this section, we present practical algorithms for runtime safety and reach-avoid prediction based on observation-aware barrier functions. Our approach combines offline polynomial optimization and online backward iteration, enabling efficient updates of probability bounds upon receiving new observations.

The complete procedure is summarized in \cref{alg:main}. In the offline phase, we first synthesize a barrier function $v(\xx)$ satisfying the observation-independent conditions (conditions 1, 3, and 6 for OSBF and ORBF), which provides precomputed values for future time steps beyond the latest observation (lines 1--6). In the online phase, each time a new observation is received, we iteratively update the OBF, OSBF, or ORBF via backward computation (lines 7--17). Specifically, at each step, barrier values are updated using expectations over successor states, enforcing observation-aware and safety-related constraints. This backward iterative update dynamically refines the probability bounds in response to new runtime data, thus ensuring rigorous and adaptive runtime safety and reach-avoid prediction. The detailed computation procedures for OBF, OSBF, and ORBF are provided below.

\paragraph{Computing OBF.}
A tight OBF can be constructed via backward iteration as shown in \cref{alg:obf-backward}, procedure \textsc{Get-OBF}. Starting from the terminal time (where $B(t_k+1, x) = 1$ for all $x$), the OBF is recursively computed backward in time. At each step, the value is updated by the expected value over successor states, ensuring the safe-before-observation condition is satisfied. At observation times, the OBF is set to zero outside the observed set to enforce the observation-aware increase condition. This procedure ensures all conditions in \cref{def:OBF} are satisfied. Moreover, by maximizing the value at each step subject to these constraints, the method yields the largest possible OBF and, consequently, the tightest lower bound for the observation probability.

\begin{algorithm}[t!]
    \caption{Runtime Safety and Reach-avoid Prediction}
    \label{alg:main}
    \begin{algorithmic}[1]
    \footnotesize
    \Require System dynamics $f$, control policy $\pi$, initial set $I$, unsafe set $U$, (target set $T$ for reach-avoid), runtime observations
    \Ensure Probability lower bound for system safety or reach-avoid

    \LComment{Offline synthesis}
    \If{safety case}
    \State Synthesize $v(x)$ satisfying conds.~\eqref{cond:Nonnegativity}, \eqref{cond:Safety}, \eqref{cond:decrease_after_last_observation} in \cref{def:OSBF}
    \ElsIf{reach-avoid case }
    \State Verify the system enters $U \cup T$ almost surely
    \State Synthesize $v(x)$ satisfying conds.~\eqref{cond:Nonnegativity}, \eqref{cond:Safety}, (6) in \cref{def:ORBF}
    \EndIf

    \Statex
    \LComment{Online update}
    \Repeat
    \If{new observation $O_k$ at time $t_k$ }
    \LComment{Observation sequence: $\{(t_i, O_i)\}_{i=1}^k$}
    \State $q, B(t,x) \gets$ \Call{Get-OBF}{$\{(t_i, O_i)\}_{i=1}^k$}
    \If{safety case}
    \State $p, V(t,x) \gets$ \Call{Get-OSBF}{$\{(t_i, O_i)\}_{i=1}^k$,$v(\xx)$}
    \ElsIf{reach-avoid case }
    \State $p, V(t,x) \gets$ \Call{Get-ORBF}{$\{(t_i, O_i)\}_{i=1}^k$,$v(\xx)$}
    \EndIf
    \State Update bound: $1 - \nicefrac{p}{q}$
    \EndIf
    \Until{no new observation}
    \end{algorithmic}
\end{algorithm}

\paragraph{Computing OSBF and ORBF.}
For OSBF construction, we assume $V(t,\xx)$ is time-invariant after the last observation, i.e., $V(t,\xx) = v(\xx)$ for all $t \geq t_k + 1$. Under this assumption, the offline computation of $v(\xx)$ only needs to satisfy the nonnegativity, safety, and expected decrease constraints, independent of runtime observations. This corresponds to lines 1–3 in \cref{alg:main}. Specifically, the offline synthesis of $v(\xx)$ involves solving:
\begin{enumerate}
    \item $v(\xx) \geq 0$, for all $\xx \in \mathcal{X}$ (Nonnegativity);
    \item $v(\xx) \geq 1$, for all $\xx \in U$ (Safety condition);
    \item $\mathbb{E}_{\omega \sim d}[v(f(\xx,\pi(\xx),\omega))] \leq v(\xx)$, for all $\xx \in (\mathcal{X}\setminus U)$ (Expected decrease).
\end{enumerate}
When $f$, $\pi$, and $v$ are polynomials, these can be encoded as sum-of-squares (SOS) constraints and efficiently solved with semidefinite programming, e.g., using \textsc{Mosek}. See the appendix for formulation details.

Once $v(\xx)$ is synthesized, $V(t,\xx)$ for $t = 0, \dots, t_k$ is computed via backward iteration similar to OBF. At observation times, $V(t,\xx)$ is set to zero outside the observed set (to enforce the observation-aware decrease), while at other times, it is updated via the expectation over successor states, subject to the safe-before-observation condition.

The synthesis of ORBF follows the same structure as OSBF, with only the final expected decrease condition modified for the reach-avoid setting. Once $v(\xx)$ is obtained, the ORBF is constructed by backward iteration, as above.

\begin{algorithm}[t!]
\caption{Calculating OBF, OSBF, and ORBF}
\label{alg:obf-backward}
\begin{algorithmic}[1]
\footnotesize
\Require System dynamics $f$, policy $\pi$, initial set $I$, unsafe set $U$, (target set $T$ for reach-avoid).
\LComment{Backward calculation of OBF}
\Procedure{Get-OBF}{$\{(t_i, O_i)\}_{i=1}^k$}
\State $B(t_{k}+1, x) \gets 1$ for all $x\in \mathcal{X}$ \Comment{Terminal condition}
\For{$t = t_k$ \textbf{down to} $0$}
\If{$t$ is an observation time}
\State \(\begin{aligned}[t]
    & B(t, \xx) \\
    \gets &\begin{cases}
     \mathbb{E} [ B(t+1, f(\xx, \pi(\xx), \omega_t)) & \text{if }\xx\in O_t \\
     0 & \text{else}
\end{cases}
\end{aligned}\)
\Else
\State \(\begin{aligned}[t]
    & B(t, \xx) \\
    \gets &\begin{cases}
     \mathbb{E} [ B(t+1, f(\xx, \pi(\xx), \omega_t)) & \text{if }\xx\in \mathcal{X}\setminus U \\
     0 & \text{else}
\end{cases}
\end{aligned}\)
\EndIf
\EndFor
\State $q \gets \min_{\xx\in I} B(0, x)$ \Comment{Lower bound at initial set}
\State \Return $q, B$
\EndProcedure
\Statex

\LComment{Backward calculation of OSBF with offline $v(x)$}
\Procedure{Get-OSBF}{ $\{(t_i, O_i)\}_{i=1}^k$, $v(x)$ }
\State $V(t_{k}+1, x) \gets v(x) $ for all $x\in \mathcal{X}$ \Comment{Terminal condition}
\For{$t = t_k$ \textbf{down to} $0$}
\If{$t$ is an observation time}
\State \(\begin{aligned}[t]
    & V(t, \xx) \\
    \gets &\begin{cases}
     \mathbb{E} [V(t+1, f(\xx, \pi(\xx), \omega_t)) & \text{if }\xx\in O_t \\
     0 & \text{else}
\end{cases}
\end{aligned}\)
\Else
\State \(\begin{aligned}[t]
    & V(t, \xx) \\
    \gets &\begin{cases}
     \mathbb{E} [ V(t+1, f(\xx, \pi(\xx), \omega_t)) & \text{if }\xx\in \mathcal{X}\setminus U \\
     0 & \text{else}
\end{cases}
\end{aligned}\)
\EndIf
\EndFor
\State $p \gets \max_{\xx\in I} V(0, x)$ \Comment{Upper bound at initial set}
\State \Return $p, V$
\EndProcedure
\Statex
\LComment{ORBF shares the same procedure as OSBF, but with different $v(x)$}
\Procedure{Get-ORBF}{ $\{(t_i, O_i)\}_{i=1}^k$, $v(x)$ }
\State \Return \Call{Get-OSBF}{$\{(t_i, O_i)\}_{i=1}^k$, $v(x)$ }
\EndProcedure
\end{algorithmic}
\end{algorithm}

\begin{figure}
    \centering
    \begin{minipage}[t]{0.47\linewidth}
        \centering
        \includegraphics[width=\linewidth]{./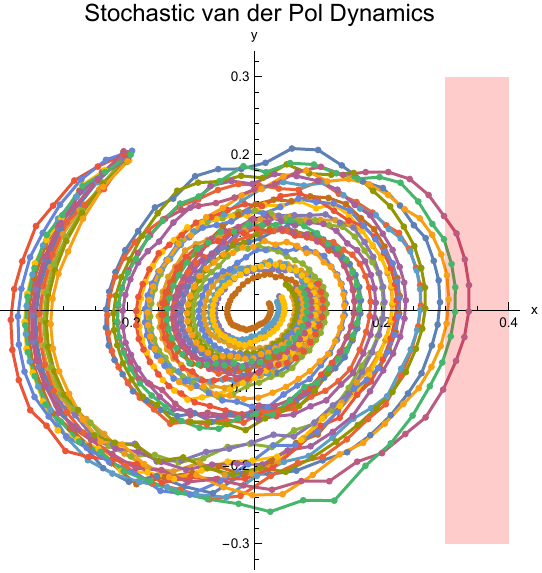}
    \end{minipage}
    \hfill
    \begin{minipage}[t]{0.47\linewidth}
        \centering
        \includegraphics[width=\linewidth]{./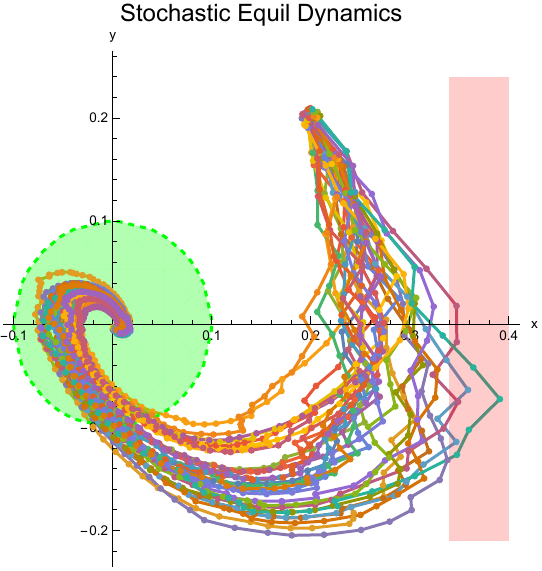}
    \end{minipage}
    \caption{Visualization of the stochastic Van der Pol system (safety) and Equil system (reach-avoid). The red region denotes the unsafe set; the green region denotes the target set.}
    \label{fig:vanderpol_equil}
\end{figure}

\begin{table*}[t]
	\begin{center}
		\begin{tabular}{l  rr c rr c rr c  rr c rr} 
			\toprule
			~ 
			&\multicolumn{2}{c}{No observation} 
                & ~
			&\multicolumn{2}{c}{2 observations}
                & ~
                &\multicolumn{2}{c}{3 observations}
                & ~
			&\multicolumn{2}{c}{4 observations}
                & ~
			&\multicolumn{2}{c}{5 observations}
			\\ \cmidrule{2-3} \cmidrule{5-6} \cmidrule{8-9} \cmidrule{11-12} \cmidrule{14-15} 
			Benchmark  
			& time (off.) &  prob. & ~ 
			& time (on.) & prob.  & ~  & time (on.) & prob.  & ~  & time (on.) & prob.  & ~ & time (on.) & prob. \\
			\midrule
                \textsf{arch} 
			& 2.39s  & 0.813  & ~
			& 0.01s  & 0.768   & ~
                & 0.03s & 0.780  & ~
                & 0.38s & 0.769  & ~
                & 0.51s & 0.574  
                \\
			\textsf{descent}  
			& 0.78s  & 0.704   & ~
			& 0.01s  & 0.626  & ~
                & 0.15s  & 0.655  & ~
                & 0.31s  & 0.502 & ~
                & 1.12s  & 0.124
                \\
			\textsf{osc}  
			& 1.49s & 0.436    & ~
			& 0.01s  & 0.532  & ~
                & 0.01s  & 0.470  & ~
                & 0.03s  & 0.850   & ~
                & 0.62s & 0.873  
                \\
                \textsf{vanderpol-1}  
			& 1.57s  & 0.343  & ~
			& 0.01s & 0.380  & ~
                & 0.06s  & 0.251  & ~
                & 0.17s  & 0.336  & ~
                & 0.23s  & 0.227  
                \\
                \textsf{vanderpol-2}  
			& 1.78s  & 0.158  & ~
			& 0.01s & 0.169  & ~
                & 0.06s  & 0.224 & ~
                & 0.15s  & 0.406  & ~
                & 0.21s  & 0.567  
                \\
			\textsf{liederivative}  
			& 6.80s  & 0.246   & ~
			& 0.01s  & 0.232 & ~
                & 0.15s  & 0.174  & ~
                & 1.26s  & 0.157  & ~
                & 1.76s & 0.056 
                \\
                \textsf{equil}  
			& 1.60s  & 0.357    & ~
			& 0.01s  & 0.274  & ~
                & 0.03s  & 0.239 & ~
                & 0.43s  & 0.194 & ~
                & 3.72s & 0.087  
                \\
			\textsf{lyapunov}  
			& 8.63s  & 0.226    & ~
			& 0.05s  & 0.283  & ~
                & 0.50s  & 0.279  & ~
                & 0.85s & 0.352  & ~
                & 2.50s & 0.308  
                \\
			\textsf{lotka}  
			& 14.96s  & 0.504     & ~
			& 0.02s  & 0.708  & ~
                & 0.42s  & 0.784  & ~
                & 0.61s & 0.907   & ~
                & 5.62s & 0.963 
                \\
			\bottomrule
		\end{tabular}
	\caption{
Experimental results for runtime safety and reach-avoid probability prediction. \textit{time (off)} denotes offline computation time without observations; \textit{time (on)} denotes online prediction time with observations; \textit{prob} is the probability lower bound for safety or reach-avoid.
}
	\label{tab:runtime_predication}
	\end{center}
\end{table*}

\section{Experiments}

To demonstrate the effectiveness and applicability of our runtime safety and reach-avoid prediction framework, we implemented the proposed algorithms in Python 3.13 (for backward iteration) and MATLAB R2025a interfaced with YALMIP~\cite{Lofberg2004} and MOSEK~\cite{DBLP:journals/mp/AndersenRT03} (for offline barrier synthesis). All experiments were conducted on a 2.60 GHz Intel Core i9-13905H laptop with 32 GB RAM, running 64-bit Windows 11.

\paragraph{Benchmarks and Experiment Setup.} 
We evaluated our framework on a set of polynomial benchmarks frequently used in the control literature. These include the Van der Pol oscillator~\cite{kanamaru2007van}, a classical nonlinear system commonly employed for verifying stochastic techniques due to its rich nonlinear dynamics, and the Equil system~\cite{prajna2007framework}, a variant of the Duffing oscillator extensively studied in control theory. The stochastic trajectories of these two representative benchmarks are illustrated in \cref{fig:vanderpol_equil}. Further details about each benchmark are provided in the appendix.
For each benchmark, we conducted runtime safety or reach-avoid predictions under different observation scenarios: no observation (purely offline prediction) and incremental online updates with up to five discrete-time observations.

\paragraph{Experimental Results and Analysis.}
The experimental results are summarized in in \cref{tab:runtime_predication}.
The experimental results clearly demonstrate the advantages and impact of incorporating runtime observations into safety and reach-avoid prediction:

\begin{itemize}
    \item \emph{Computation Efficiency}:  
    The initial offline safety or reach-avoid estimation, computed without runtime observations, typically incurs higher computational costs due to semidefinite optimization. In contrast, subsequent online predictions that incorporate runtime observations exhibit shorter computation times. This confirms that our online iterative updating scheme efficiently leverages the precomputed barriers, enabling rapid predictions suitable for real-time applications.

    \item \emph{Probability Refinement and Adaptation}:
    Runtime observations led to notable changes in the estimated safety and reach-avoid probabilities. As experiments show, additional runtime observations do not guarantee monotonically improving safety predictions. Depending on the observed states, predictions either enhanced the system's safety confidence (e.g., the osc benchmark) or, conversely, markedly reduced the estimated safety (e.g., descent benchmark). This reflects the adaptive nature of our method, as the predictions dynamically align with actual system trajectories and observations rather than solely relying on initial conservative estimates.

    \item \emph{Practical Implications}:
    The experiments underline the importance of runtime information: without observations, estimates remain conservative or overly optimistic, potentially misleading safety assessments. By incorporating discrete-time observations, the framework enables adaptive and realistic safety predictions, crucial for decision-making in safety-critical stochastic systems.
\end{itemize}

Overall, our experimental evaluations validate that the proposed runtime prediction framework effectively integrates online observations, efficiently computes updated predictions, and provides adaptively refined probability bounds essential for practical safety-critical applications.

\section{Related Work}

\paragraph{Verification of Deterministic Systems.}
The technique presented in this paper falls within the realm of barrier-based approaches.
In their seminal work~\cite{prajna2004stochastic}, Prajna proposed the concept of barrier certificates to encode inductive invariants that witness safety (or dually, reachability) of deterministic dynamical systems over an unbounded-time horizon. Since then, significant efforts have been dedicated to developing more relaxed forms of barrier-certificate conditions that still admit efficient synthesis, thereby leading to exponential-type barrier certificates~\cite{Kong13}, Darboux-type barrier certificates~\cite{zeng2016darboux}, general barrier certificates~\cite{Gan17}, and vector barrier certificates~\cite{Sogokon+Others/2018/Vector}. Similar barrier conditions have been utilized to verify systems with control inputs~\cite{xu2015robustness,ames2016control} and disturbances~\cite{wang2017generating} against various Linear Temporal Logic (LTL)~\cite{vardi2005automata} properties.

\paragraph{Verification of Stochastic Systems.}
Barrier-based methods have also been extended to verifying stochastic systems~\cite{DBLP:conf/cav/FengC00Z20,lechner2022stability,jagtap2020formal,vzikelic2023learning}.
There are also various alternative methods to reason about discrete-time stochastic dynamics, including techniques based on sampling~\cite{DBLP:conf/qest/HenriquesMZPC12,DBLP:conf/sefm/LegayST14}, dynamic programming~\cite{APLS08,DBLP:journals/automatica/SummersL10}, Markov abstractions~\cite{DBLP:journals/tac/LahijanianAB15}, probabilistic model checking~\cite{baier2008principles,DBLP:conf/lics/Kwiatkowska03} (for finite-state models), and various forms of value iteration~\cite{DBLP:conf/cav/Baier0L0W17,DBLP:conf/cav/HartmannsK20,DBLP:conf/cav/QuatmannK18} for determining reachability probabilities in Markov models.

\paragraph{Runtime Monitoring and Verification.} 
Runtime verification~\cite{bartocci2018introduction} is the process of dynamically monitoring a system during execution to ensure it adheres to specified safety or performance properties. Typically, properties are specified using temporal logic~\cite{deshmukh2017robust}, system signals are continuously monitored, and robustness metrics quantify how strongly a signal satisfies or violates the specification~\cite{raman2015reactive,su2025runtime}. Recently, learning-based runtime monitoring~\cite{yu2025neural,dawson2023safe} has attracted considerable attention since it allows scalable and adaptive monitoring in high-dimensional, uncertain environments.

\section{Conclusion}

We proposed a runtime prediction framework for safety and reach-avoid probabilities in discrete-time stochastic systems, effectively integrating real-time observations through observation-aware barrier functions. Our approach, combining offline computations with online updates, provides rigorous yet adaptive probability estimates, validated by experimental results on standard benchmarks. A promising avenue for future research is to leverage these runtime predictions for dynamically modifying control policies, thus enhancing real-time system safety and performance.

\section{Acknowledgments}
The authors would like to thank the anonymous reviewers for their valuable comments. This work is supported by the National Natural Science Foundation of China (Grants Nos. 62502475, W2511064, 62032024), and by the ISCAS Basic Research (Grant No. ISCAS-JCZD-202406). 

\bibliography{aaai2026}

@inproceedings{LZZ11,
	title        = {Computing semi-algebraic invariants for polynomial dynamical systems},
	author       = {Liu, Jiang and Zhan, Naijun and Zhao, Hengjun},
	year         = 2011,
	booktitle    = {{EMSOFT}'11},
	publisher    = {ACM},
	pages        = {97--106},
	_address     = {New York, NY, USA}
}

@article{APLS08,
	title        = {Probabilistic reachability and safety for controlled discrete time stochastic hybrid systems},
	author       = {A. Abate and M. Prandini and J. Lygeros and S. Sastry},
	year         = 2008,
	journal      = {Automatica},
	volume       = 44,
	number       = 11,
	pages        = {2724--2734}
}

@inproceedings{prajna2004stochastic,
	title        = {Stochastic safety verification using barrier certificates},
	author       = {Prajna, Stephen and Jadbabaie, Ali and Pappas, George J},
	year         = 2004,
	booktitle    = {{CDC}'04},
	volume       = 1,
	pages        = {929--934},
	organization = {IEEE}
}

@inproceedings{Lofberg2004,
	title        = {{YALMIP}: A toolbox for modeling and optimization in {MATLAB}},
	author       = {L{\"{o}}fberg, J.},
	year         = 2004,
	booktitle    = {{CACSD}'04},
	pages        = {284--289}
}

@article{DBLP:journals/mp/AndersenRT03,
	title        = {On implementing a primal-dual interior-point method for conic quadratic optimization},
	author       = {Erling D. Andersen and Cornelis Roos and Tam{\'{a}}s Terlaky},
	year         = 2003,
	journal      = {Math. Program.},
	volume       = 95,
	number       = 2,
	pages        = {249--277},
	_url         = {https://doi.org/10.1007/s10107-002-0349-3},
	_doi         = {10.1007/s10107-002-0349-3}
}

@inproceedings{Sogokon+Others/2018/Vector,
	title        = {Vector Barrier Certificates and Comparison Systems},
	author       = {Andrew Sogokon and Khalil Ghorbal and Yong Kiam Tan and Andr{\'{e}} Platzer},
	year         = 2018,
	booktitle    = {{FM}'18},
	publisher    = {Springer},
	series       = {Lecture Notes in Computer Science},
	volume       = 10951,
	pages        = {418--437}
}

@book{williams1991probability,
	title={Probability with martingales},
	author={Williams, David},
	year={1991},
	publisher={Cambridge university press}
}

@book{durrett2019probability,
	title={Probability: theory and examples},
	author={Durrett, Rick},
	volume={49},
	year={2019},
	publisher={Cambridge university press}
}

@inproceedings{feng2020unbounded,
	title={Unbounded-time safety verification of stochastic differential dynamics},
	author={Feng, Shenghua and Chen, Mingshuai and Xue, Bai and Sankaranarayanan, Sriram and Zhan, Naijun},
	booktitle={International Conference on Computer Aided Verification},
	pages={327--348},
	year={2020},
	organization={Springer}
}

@article{ames2016control,
	title={Control barrier function based quadratic programs for safety critical systems},
	author={Ames, Aaron D and Xu, Xiangru and Grizzle, Jessy W and Tabuada, Paulo},
	journal={IEEE Transactions on Automatic Control},
	volume={62},
	number={8},
	pages={3861--3876},
	year={2016},
	publisher={IEEE}
}

@article{jagtap2020formal,
	title={Formal synthesis of stochastic systems via control barrier certificates},
	author={Jagtap, Pushpak and Soudjani, Sadegh and Zamani, Majid},
	journal={IEEE Transactions on Automatic Control},
	volume={66},
	number={7},
	pages={3097--3110},
	year={2020},
	publisher={IEEE}
}

@inproceedings{chakarov2013probabilistic,
  title={Probabilistic program analysis with martingales},
  author={Chakarov, Aleksandar and Sankaranarayanan, Sriram},
  booktitle={International Conference on Computer Aided Verification},
  pages={511--526},
  year={2013},
  organization={Springer}
}

@inproceedings{chatterjee2016termination,
  title={Termination analysis of probabilistic programs through Positivstellensatz’s},
  author={Chatterjee, Krishnendu and Fu, Hongfei and Goharshady, Amir Kafshdar},
  booktitle={International Conference on Computer Aided Verification},
  pages={3--22},
  year={2016},
  organization={Springer}
}

@article{majumdar2025sound,
  title={Sound and complete proof rules for probabilistic termination},
  author={Majumdar, Rupak and Sathiyanarayana, VR},
  journal={Proceedings of the ACM on Programming Languages},
  volume={9},
  number={POPL},
  pages={1871--1902},
  year={2025},
  publisher={ACM New York, NY, USA}
}

@article{kanamaru2007van,
  title={Van der Pol oscillator},
  author={Kanamaru, Takashi},
  journal={Scholarpedia},
  volume={2},
  number={1},
  pages={2202},
  year={2007}
}

@article{prajna2007framework,
  title={A framework for worst-case and stochastic safety verification using barrier certificates},
  author={Prajna, Stephen and Jadbabaie, Ali and Pappas, George J},
  journal={IEEE Transactions on Automatic Control},
  volume={52},
  number={8},
  pages={1415--1428},
  year={2007},
  publisher={IEEE}
}

@book{bertsekas1996stochastic,
	title={Stochastic optimal control: {T}he discrete-time case},
	author={Bertsekas, Dimitri and Shreve, Steven E},
	volume={5},
	year={1996},
	publisher={Athena Scientific}
}

@book{paul2013stochastic,
	title={Stochastic Processes: {F}rom Physics to Finance},
	author={Paul, W. and Baschnagel, J.},
	_isbn={9783319003276},
	year={2013},
	publisher={Springer}
}

@book{steele2001stochastic,
	title={Stochastic calculus and financial applications},
	author={Steele, J Michael},
	volume={1},
	year={2001},
	publisher={Springer}
}

@book{allen2010introduction,
	title={An introduction to stochastic processes with applications to biology},
	author={Allen, Linda JS},
	year={2010},
	publisher={CRC press}
}

@inproceedings{lechner2022stability,
  title={Stability verification in stochastic control systems via neural network supermartingales},
  author={Lechner, Mathias and {{Z}}ikeli{c}, {DJ}or{dj}e and Chatterjee, Krishnendu and Henzinger, Thomas A},
  booktitle={Proceedings of the aaai conference on artificial intelligence},
  volume={36},
  number={7},
  pages={7326--7336},
  year={2022}
}

@inproceedings{vzikelic2023learning,
  title={Learning control policies for stochastic systems with reach-avoid guarantees},
  author={{\v{Z}}ikeli{\'c}, {D}or{d}e and Lechner, Mathias and Henzinger, Thomas A and Chatterjee, Krishnendu},
  booktitle={Proceedings of the AAAI Conference on Artificial Intelligence},
  volume={37},
  number={10},
  pages={11926--11935},
  year={2023}
}

@inproceedings{DBLP:conf/qest/HenriquesMZPC12,
	author    = {David Henriques and
	Jo{\~{a}}o G. Martins and
	Paolo Zuliani and
	Andr{\'{e}} Platzer and
	Edmund M. Clarke},
	title     = {Statistical Model Checking for {M}arkov Decision Processes},
	booktitle = {{QEST}},
	pages     = {84--93},
	publisher = {{IEEE} Computer Society},
	year      = {2012}
}

@inproceedings{DBLP:conf/sefm/LegayST14,
	author    = {Axel Legay and
	Sean Sedwards and
	Louis{-}Marie Traonouez},
	title     = {Scalable Verification of {M}arkov Decision Processes},
	booktitle = {{SEFM} Workshops},
	series    = {LNCS},
	volume    = {8938},
	pages     = {350--362},
	publisher = {Springer},
	year      = {2014}
}

@article{DBLP:journals/automatica/SummersL10,
	author    = {Sean Summers and
	John Lygeros},
	title     = {Verification of discrete time stochastic hybrid systems: {A} stochastic
	reach-avoid decision problem},
	journal   = {Autom.},
	volume    = {46},
	number    = {12},
	pages     = {1951--1961},
	year      = {2010}
}

@article{putinar1993positive,
	title={Positive polynomials on compact semi-algebraic sets},
	author={Putinar, Mihai},
	journal={Indiana University Mathematics Journal},
	volume={42},
	number={3},
	pages={969--984},
	year={1993},
	publisher={JSTOR}
}

@inproceedings{DBLP:conf/lics/Kwiatkowska03,
	author    = {Marta Z. Kwiatkowska},
	title     = {Model Checking for Probability and Time: {F}rom Theory to Practice},
	booktitle = {{LICS}},
	pages     = {351},
	publisher = {{IEEE} Computer Society},
	year      = {2003},
	_url       = {https://doi.org/10.1109/LICS.2003.1210075},
	_doi       = {10.1109/LICS.2003.1210075}
}

@article{vandenberghe1996semidefinite,
	title={Semidefinite programming},
	author={Vandenberghe, Lieven and Boyd, Stephen},
	journal={SIAM review},
	volume={38},
	number={1},
	pages={49--95},
	year={1996},
	publisher={SIAM}
}

@article{parrilo2003semidefinite,
	title={Semidefinite programming relaxations for semialgebraic problems},
	author={Parrilo, Pablo A},
	journal={Mathematical programming},
	volume={96},
	number={2},
	pages={293--320},
	year={2003},
	publisher={Springer}
}

@inproceedings{SGJ16,
  author    = {Andrew Sogokon and
               Khalil Ghorbal and
               Taylor T. Johnson},
  editor    = {Goran Frehse and
               Matthias Althoff},
  title     = {Non-linear Continuous Systems for Safety Verification},
  booktitle = {ARCH@CPSWeek},
  series    = {EPiC Series in Computing},
  volume    = {43},
  pages     = {42--51},
  year      = {2016}
}

@article{XZF22,
  title={Reach-Avoid Analysis for Stochastic Differential Equations},
  author={Xue, Bai and Zhan, Naijun and Fr{\"a}nzle, Martin},
  journal={arXiv preprint arXiv:2208.10752},
  year={2022}
}

@article{PJP07,
  author    = {Stephen Prajna and
               Ali Jadbabaie and
               George J. Pappas},
  title     = {A Framework for Worst-Case and Stochastic Safety Verification Using
               Barrier Certificates},
  journal   = {{IEEE} Trans. Autom. Control.},
  volume    = {52},
  number    = {8},
  pages     = {1415--1428},
  year      = {2007}
}

@inproceedings{GJP+14,
  author    = {Eric Goubault and
               Jacques{-}Henri Jourdan and
               Sylvie Putot and
               Sriram Sankaranarayanan},
  title     = {Finding non-polynomial positive invariants and lyapunov functions
               for polynomial systems through Darboux polynomials},
  booktitle = {{ACC}},
  pages     = {3571--3578},
  publisher = {{IEEE}},
  year      = {2014}
}

@article{RS10,
  author    = {Stefan Ratschan and
               Zhikun She},
  title     = {Providing a Basin of Attraction to a Target Region of Polynomial Systems by Computation of Lyapunov-Like Functions},
  journal   = {{SIAM} J. Control. Optim.},
  volume    = {48},
  number    = {7},
  pages     = {4377--4394},
  year      = {2010},
}

@inproceedings{Kong13,
	author    = {Hui Kong and Fei He and Xiaoyu Song and William N. N. Hung and Ming Gu},
	title     = {Exponential-condition-based barrier certificate generation for safety verification of hybrid systems},
	booktitle = {{CAV}},
	year      = {2013},
	volume    = {8044},
	series    = {LNCS},
	pages     = {242--257},
	publisher = {Springer},
}

@inproceedings{zeng2016darboux,
	title={{D}arboux-type barrier certificates for safety verification of nonlinear hybrid systems},
	author={Zeng, Xia and Lin, Wang and Yang, Zhengfeng and Chen, Xin and Wang, Lilei},
	booktitle={EMSOFT},
	pages={1--10},
	year={2016},
	publisher = {{ACM}}
}

@article{Gan17,
	author    = {Liyun Dai and
	Ting Gan and
	Bican Xia and
	Naijun Zhan},
	title     = {Barrier certificates revisited},
	journal   = {J. Symb. Comput.},
	volume    = {80},
	pages     = {62--86},
	year      = {2017}
}

@InProceedings{xu2015robustness,
	author    = {Xiangru Xu and Paulo Tabuada and Jessy W. Grizzle and Aaron D. Ames},
	title     = {Robustness of control barrier functions for safety critical control},
	booktitle = {{ADHS}},
	year      = {2015},
	volume    = {48},
	number    = {27},
	_series    = {IFAC-PapersOnLine},
	pages     = {54--61},
	publisher = {Elsevier},
	_doi      = {10.1016/j.ifacol.2015.11.152},
	_url      = {https://doi.org/10.1016/j.ifacol.2015.11.152},
}

@Article{wang2017generating,
	author    = {Wang, Qiuye and Li, Yangjia and Xia, Bican and Zhan, Naijun},
	title     = {Generating semi-algebraic invariants for non-autonomous polynomial hybrid systems},
	journal   = {J. Syst. Sci. Complex.},
	year      = {2017},
	volume    = {30},
	number    = {1},
	pages     = {234--252},
	publisher = {Springer},
}

@inproceedings{DBLP:conf/cav/FengC00Z20,
	author    = {Shenghua Feng and
	Mingshuai Chen and
	Bai Xue and
	Sriram Sankaranarayanan and
	Naijun Zhan},
	title     = {Unbounded-Time Safety Verification of Stochastic Differential Dynamics},
	booktitle = {{CAV} {(II)}},
	series    = {LNCS},
	volume    = {12225},
	pages     = {327--348},
	publisher = {Springer},
	year      = {2020}
}

@article{DBLP:journals/tac/LahijanianAB15,
	author    = {Morteza Lahijanian and
	Sean B. Andersson and
	Calin Belta},
	title     = {Formal Verification and Synthesis for Discrete-Time Stochastic Systems},
	journal   = {{IEEE} Trans. Autom. Control.},
	volume    = {60},
	number    = {8},
	pages     = {2031--2045},
	year      = {2015}
}

@book{baier2008principles,
	title={Principles of Model Checking},
	author={Baier, Christel and Katoen, Joost-Pieter},
	year={2008},
	publisher={MIT press}
}

@inproceedings{DBLP:conf/cav/Baier0L0W17,
	author    = {Christel Baier and
	Joachim Klein and
	Linda Leuschner and
	David Parker and
	Sascha Wunderlich},
	title     = {Ensuring the Reliability of Your Model Checker: {I}nterval Iteration
	for {M}arkov Decision Processes},
	booktitle = {{CAV} {(II)}},
	series    = {{LNCS}},
	volume    = {10426},
	pages     = {160--180},
	publisher = {Springer},
	year      = {2017},
	_url       = {https://doi.org/10.1007/978-3-319-63387-9\_8},
	_doi       = {10.1007/978-3-319-63387-9\_8}
}

@inproceedings{DBLP:conf/cav/QuatmannK18,
	author    = {Tim Quatmann and
	Joost{-}Pieter Katoen},
	title     = {Sound Value Iteration},
	booktitle = {{CAV} {(I)}},
	series    = {{LNCS}},
	volume    = {10981},
	pages     = {643--661},
	publisher = {Springer},
	year      = {2018},
	_url       = {https://doi.org/10.1007/978-3-319-96145-3\_37},
	_doi       = {10.1007/978-3-319-96145-3\_37}
}

@inproceedings{DBLP:conf/cav/HartmannsK20,
	author    = {Arnd Hartmanns and
	Benjamin Lucien Kaminski},
	title     = {Optimistic Value Iteration},
	booktitle = {{CAV} {(II)}},
	series    = {{LNCS}},
	volume    = {12225},
	pages     = {488--511},
	publisher = {Springer},
	year      = {2020},
	_url       = {https://doi.org/10.1007/978-3-030-53291-8\_26},
	_doi       = {10.1007/978-3-030-53291-8\_26}
}

@incollection{vardi2005automata,
  title={An automata-theoretic approach to linear temporal logic},
  author={Vardi, Moshe Y},
  booktitle={Logics for concurrency: structure versus automata},
  pages={238--266},
  year={2005},
  publisher={Springer}
}

@incollection{bartocci2018introduction,
  title={Introduction to runtime verification},
  author={Bartocci, Ezio and Falcone, Yli{\`e}s and Francalanza, Adrian and Reger, Giles},
  booktitle={Lectures on Runtime Verification: Introductory and Advanced Topics},
  pages={1--33},
  year={2018},
  publisher={Springer}
}

@article{deshmukh2017robust,
  title={Robust online monitoring of signal temporal logic},
  author={Deshmukh, Jyotirmoy V and Donz{\'e}, Alexandre and Ghosh, Shromona and Jin, Xiaoqing and Juniwal, Garvit and Seshia, Sanjit A},
  journal={Formal Methods in System Design},
  volume={51},
  number={1},
  pages={5--30},
  year={2017},
  publisher={Springer}
}

@inproceedings{raman2015reactive,
  title={Reactive synthesis from signal temporal logic specifications},
  author={Raman, Vasumathi and Donz{\'e}, Alexandre and Sadigh, Dorsa and Murray, Richard M and Seshia, Sanjit A},
  booktitle={Proceedings of the 18th international conference on hybrid systems: Computation and control},
  pages={239--248},
  year={2015}
}

@inproceedings{su2025runtime,
  title={Runtime enforcement of CPS against signal temporal logic},
  author={Su, Han and Shankar, Saumya and Pinisetty, Srinivas and Roop, Partha S and Zhan, Naijun},
  booktitle={Proceedings of the 28th ACM International Conference on Hybrid Systems: Computation and Control},
  pages={1--11},
  year={2025}
}

@inproceedings{yu2025neural,
  title={Neural Control and Certificate Repair via Runtime Monitoring},
  author={Yu, Emily and {\v{Z}}ikeli{\'c}, {D}or{d}e and Henzinger, Thomas A},
  booktitle={Proceedings of the AAAI Conference on Artificial Intelligence},
  volume={39},
  number={25},
  pages={26409--26417},
  year={2025}
}

@article{dawson2023safe,
  title={Safe control with learned certificates: A survey of neural lyapunov, barrier, and contraction methods for robotics and control},
  author={Dawson, Charles and Gao, Sicun and Fan, Chuchu},
  journal={IEEE Transactions on Robotics},
  volume={39},
  number={3},
  pages={1749--1767},
  year={2023},
  publisher={IEEE}
}

\clearpage
\section{Appendix}

\subsection{Preliminaries on Martingale Theory}
This subsection introduces essential measure-theoretic preliminaries that underpin the proofs of the main results. For a more comprehensive introduction to probability and martingale theory, we refer the interested readers to~\cite{durrett2019probability, williams1991probability}.

A \emph{probability space} is a triple $(\Omega, \mathcal{F}, \mathbb{P})$, where $\Omega$ is a sample space, $\mathcal{F} \subseteq 2^\Omega$ is a $\sigma$-algebra on $\Omega$, and $\mathbb{P}\from \mathcal{F} \to [0, 1]$ is a probability measure on the measurable space $(\Omega, \mathcal{F})$. For any measurable space $(\Omega,\mathcal{F})$, denote the set of probability measure on $\Omega$ by $\mathcal{D}(\Omega)$.  A \emph{random variable} $X$ defined on the probability space $(\Omega, \mathcal{F}, P)$ is a $\mathcal{F}$-measurable function $X\from \Omega \to \RR\cup\{-\infty,+\infty\}$; its $\emph{expectation}$ (w.r.t. $\mathbb{P}$) is denoted by $E[X]$; For any set $A\subseteq \Omega$, $E[X\cdot[A]]$ is also denoted by $E[X;A]$. 

Let $\mathcal{F}'\subseteq \mathcal{F}$ is a sub-$\sigma$-algebra, a \emph{conditional expectation} of $X$ w.r.t. $\mathcal{F}'$ is a $\mathcal{F}'$-measurable random variable denoted by $E[X\mid \mathcal{F}']$, such that $E[X\cdot [A]]=E[E[X\mid \mathcal{F}']\cdot [A]]$ for all $A\in \mathcal{F}'$. A collection $\{\mathcal{F}_n \mid n \in \mathbb{N}\}$ of $\sigma$-algebras in $\mathcal{F}$ is a \emph{filtration} if $\mathcal{F}_n \subseteq \mathcal{F}_{n+k}$ for $n, k \in \mathbb{N}$. A random variable $T\from \Omega \to [0, \infty]$ is called a \emph{stopping time} w.r.t. some filtration $\{\mathcal{F}_n \mid n \in \mathbb{N}_0\}$ of $\mathcal{F}$ if $\{T \le n\} \in \mathcal{F}_n$ for all $n \in \mathbb{N}$.

\paragraph{Martingales.} A stochastic process $\{X_n\}_{n\in \mathbb{N}}$ adapted to a filtration $\{\mathcal{F}_n \mid n \in \mathbb{N}\}$ is called a \emph{supermartingale} (resp. \emph{submartingale}) if $E[X_n] < \infty$ for any $n \in \mathbb{N}_0$ and $E[X_{m} \mid \mathcal{F}_n] \le X_n$ (resp. $E[X_{m} \mid \mathcal{F}_n] \geq X_n$) for all $m \le n$. That is, the conditional expected value of any future observation, given all past observations, is no larger (resp. smaller) than the most recent observation. $\{X_n\}_{n\in \mathbb{N}}$ is a martingale if it is both supermartingale and submartingale.

Intuitively, A supermartingale is a stochastic process where, at any given time, the expected value of the next step is less than or equal to the current value, capturing the idea of a process that, on average, does not increase. In contrast, a submartingale is a process whose expected future value is at least as large as its present value, reflecting a tendency to increase over time.

The optional stopping theorem asserts that, under mild and natural conditions, no strategic choice of a stopping time can increase the expected value of a supermartingale (or decrease it for a submartingale). This formalizes the intuition that, in fair stochastic processes, timing alone cannot yield an expected advantage.

\begin{theorem}[Optional Stopping Theorem \cite{durrett2019probability,williams1991probability}]\label{thm:ost}
Let $T$ be a stopping time w.r.t. $\mathcal{F}_n$, and $\{X_n\}_{n\in \NN}$ is a supermartingale (resp. submaritngale) adapted to $\mathcal{F}_n$, such that $E[X_n]<\infty$ for all $n\in \NN$. Assume that one of the following three conditions holds:
\begin{itemize}
	\item $T$ is bounded almost surely, i.e. there exists $N\in\mathbb{N}$ such that $\mathbb{P}(T\leq N)=1$;
	\item $X_{n\wedge T}$ is bounded, i.e. there exists constant $C\in \RR^+$ such that $|X_{n\wedge T}|\leq C$ almost surely;
	\item $E[T]< \infty$ and $X_{n\wedge T}$ is conditional difference bounded, i.e. there exists $M>0$ such that $$E[|X_{(n+1)\wedge T}-X_{n\wedge T}|\mid \mathcal{F}_n]\leq M$$
\end{itemize}
then $E[X_T] \leq E[X_0]$ if $\{X_n\}_{n\in \NN}$ is a supermartingale. (resp. $E[X_T] \geq E[X_0]$ for the submartingale case).
\end{theorem}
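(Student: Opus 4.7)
The plan is to prove the supermartingale case; the submartingale case then follows at once by applying that result to the process $\{-X_n\}$. The argument proceeds in two steps: first, establish that the stopped process $X^T_n := X_{n\wedge T}$ is itself a supermartingale adapted to $\{\mathcal{F}_n\}$, so that $E[X^T_n] \leq E[X_0]$ for every $n \in \mathbb{N}$; second, justify the limit identity $\lim_{n} E[X^T_n] = E[X_T]$, with the three hypotheses providing three distinct routes for the interchange of limit and expectation.

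For the first step, I would use the telescoping identity
\[
X^T_{n+1} - X^T_n \;=\; \mathbf{1}_{\{T > n\}}\bigl(X_{n+1} - X_n\bigr),
\]
together with the fact that $\{T > n\} = \{T \leq n\}^c \in \mathcal{F}_n$, to pull the indicator out of $E[\,\cdot \mid \mathcal{F}_n]$. Combined with the supermartingale property $E[X_{n+1} - X_n \mid \mathcal{F}_n] \leq 0$, this yields $E[X^T_{n+1} \mid \mathcal{F}_n] \leq X^T_n$; iteration gives $E[X^T_n] \leq E[X_0]$ for all $n$.

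For the second step, note that $X^T_n \to X_T$ almost surely on $\{T < \infty\}$, which is of full measure in every case. Under Case 1, $X^T_n = X_T$ for all $n \geq N$, so the conclusion is immediate. Under Case 2, the uniform bound $|X^T_n| \leq C$ permits direct application of the Dominated Convergence Theorem. Under Case 3, the strategy is to dominate $|X^T_n - X_0| \leq \sum_{k=0}^{\infty} \mathbf{1}_{\{T > k\}}|X_{k+1} - X_k|$ and show the right-hand side is integrable: the tower property, together with the hypothesis $\mathbf{1}_{\{T > k\}}\,E[|X_{k+1}-X_k| \mid \mathcal{F}_k] \leq M$, yields
\[
E\bigl[\mathbf{1}_{\{T > k\}}|X_{k+1} - X_k|\bigr] \;\leq\; M\,\mathbb{P}(T > k),
\]
and summing via the tail identity $E[T] = \sum_{k \geq 0} \mathbb{P}(T > k) < \infty$ produces an $L^1$ dominator of total mass at most $E[|X_0|] + M\,E[T]$. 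Dominated convergence then delivers $E[X^T_n] \to E[X_T]$.

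The main obstacle is Case 3, since neither a deterministic bound on $T$ nor a uniform bound on the stopped process is available. The delicate point is converting the conditional estimate on the stopped differences into an absolute $L^1$ bound on the total variation $\sum_k |X^T_{k+1} - X^T_k|$, which hinges on the interplay of the $\mathcal{F}_k$-measurability of $\{T > k\}$, the tower property, and the tail-sum formula for $E[T]$. Cases 1 and 2, by contrast, are essentially bookkeeping once the stopped process has been identified as a supermartingale.
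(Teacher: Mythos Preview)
The paper does not supply its own proof of this statement: the Optional Stopping Theorem is quoted in the appendix as a preliminary from \cite{durrett2019probability,williams1991probability} and used as a black box in the proofs of Theorems~\ref{thm:obf-observation-guarantee}--\ref{thm:ORBF_guarantee}. There is therefore nothing in the paper to compare your argument against; your proposal is the standard textbook proof (stopped process is again a supermartingale, then pass to the limit via bounded/dominated convergence), and it is correct.

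One small quibble: you assert that $\{T<\infty\}$ has full measure ``in every case,'' but under the second hypothesis (uniform boundedness of $X_{n\wedge T}$) nothing forces $T<\infty$ almost surely. The argument still goes through, since a bounded supermartingale converges a.s.\ and in $L^1$ by the supermartingale convergence theorem, so one defines $X_T:=\lim_n X_{n\wedge T}$ on $\{T=\infty\}$ and the dominated convergence step is unaffected. You should make that convention explicit rather than claim $T<\infty$ a.s.
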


\medskip
\subsection{Proof of \cref{thm:obf-observation-guarantee}}
\obfobservationguarantee*
\begin{proof}
The main idea is to construct a submartingale based on the OBF, such that its expected value at time $t_k + 1$ exactly equals the probability of realizing the prescribed observation sequence.
Specifically, let $Y_n$ be the process
\[
Y_n := \prod_{i < n} [X_i \in (\mathcal{X}\setminus U)] \prod_{t_i < n} [X_{t_i} \in O_{t_i}]\cdot  B(n, X_n),
\]
where $[\cdot]$ denotes the indicator function. According to the definition of OBF, for $t\leq t_k$, we have

\begin{equation*}
\begin{aligned}
 & E[Y_{n+1} \mid X_n] 
= 
\begin{aligned}[t]
& \prod_{i < n+1} [X_i \in (\mathcal{X}\setminus U)]  \prod_{t_i < n+1} [X_{t_i} \in O_{t_i}] \cdot \\
& E[B(n+1, X_{n+1}) \mid X_n]
\end{aligned}\\
= &\, \begin{aligned}[t]
& \prod_{i < n} [X_i \in (\mathcal{X}\setminus U)]  \prod_{t_i < n} [X_{t_i} \in O_{t_i}] \cdot \\
& [X_n \in (\mathcal{X}\setminus U)] [X_{t_n} \in O_{t_n}] E[B(n+1, X_{n+1}) \mid X_n]
\end{aligned}\\
\geq  & \, \prod_{i < n} [X_i \in (\mathcal{X}\setminus U)] \prod_{t_i < n} [X_{t_i} \in O_{t_i}]\cdot  B(n, X_n)\\
= &\, Y_n\,,
\end{aligned}
\end{equation*}
where the inequality follows from the conditions of the OBF. This implies $Y_n$ is indeed a submartingale for $t \leq t_k +1$. 
By the submartingale property, the expected value $\mathbb{E}[Y_{t_k+1}]$ is lower bounded by its initial value, i.e.
\[ \mathbb{E}[Y_{t_k+1}] \geq E[Y_0] \geq q ,
\]
Moreover, the terminal condition (3) implies 
\[ 
\begin{aligned}
    E[Y_{t_k+1}] = & E[\prod_{i \leq t_k } [X_i \in (\mathcal{X}\setminus U)] \prod_{t_i \leq t_k} [X_{t_i} \in O_{t_i}] ]\\
    = & \mathbb{P}_{x_0}\left(X_{t_i} \in O_i,~\forall i \leq k\right) 
\end{aligned}
\]
Note the second equality holds since we implicitly assume that the system has not entered the unsafe set before each observation time, and all probabilities in this paper are conditioned on prior safety, as in remark 3.
This completes the argument and establishes the desired lower bound.
\end{proof}

\medskip
\subsection{Proof of \cref{thm:osbf-safety-guarantee}}
\osbfsafetyguarantee*
\begin{proof}
We will construct a supermartingale based on the OSBF, whose expected value at the stopping time for entering the unsafe set captures the safety violation probability. Formally, let
$Y_n = \prod_{i < n} [X_i \in (\mathcal{X}\setminus U)] \prod_{t_i < n} [X_{t_i} \in O_{t_i}]\cdot  B(n, X_n),$
and let 
$T := \inf \{n \mid X_n \in U\}$. Clearly, $T\geq n+1$ when $Y_n \neq 0$ and $n \leq t_k$. Therefore, by the definition of OSBF, for $n\leq t_k$, we have
\begin{equation*}
\begin{aligned}
 & E[Y_{(n+1)\wedge T} \mid X_n] = E[Y_{(n+1)}\mid X_n] \\
= &\,
\begin{aligned}[t]
& \prod_{i < n+1} [X_i \in (\mathcal{X}\setminus U)]  \prod_{t_i < n+1} [X_{t_i} \in O_{t_i}] \cdot \\
& E[B(n+1, X_{n+1}) \mid X_n]
\end{aligned}\\
= &\, \begin{aligned}[t]
& \prod_{i < n} [X_i \in (\mathcal{X}\setminus U)]  \prod_{t_i < n} [X_{t_i} \in O_{t_i}] \cdot \\
& [X_n \in (\mathcal{X}\setminus U)] [X_{t_n} \in O_{t_n}] E[B(n+1, X_{n+1}) \mid X_n]
\end{aligned}\\
\leq  & \, \prod_{i < n} [X_i \in (\mathcal{X}\setminus U)] \prod_{t_i < n} [X_{t_i} \in O_{t_i}]\cdot  B(n, X_n)\\
= &\, Y_n = Y_{n\wedge T}\,,
\end{aligned}
\end{equation*}
For $n> t_k$, if $T \leq n$, we have $Y_{{n+1}\wedge T} =  Y_{n\wedge T} = Y_T$, this implies $ E[Y_{(n+1)\wedge T} \mid X_n] = E[Y_{n\wedge T}]$. If $T > n$, by condition \ref{cond:decrease_after_last_observation} in \cref{def:OSBF}, we have
\begin{equation*}
\begin{aligned}
 & E[Y_{(n+1)\wedge T} \mid X_n] = E[Y_{(n+1)}\mid X_n] \\
= &\,
\begin{aligned}[t]
& \prod_{i < n+1} [X_i \in (\mathcal{X}\setminus U)]  \prod_{i=1}^k [X_{t_i} \in O_{t_i}] \cdot \\
& E[B(n+1, X_{n+1}) \mid X_n]
\end{aligned}\\
= &\, \begin{aligned}[t]
& \prod_{i < n} [X_i \in (\mathcal{X}\setminus U)]  \prod_{i = 1}^k [X_{t_i} \in O_{t_i}] \cdot \\
& [X_n \in (\mathcal{X}\setminus U)] E[B(n+1, X_{n+1}) \mid X_n]
\end{aligned}\\
\leq  & \, \prod_{i < n} [X_i \in (\mathcal{X}\setminus U)] \prod_{i=1}^k [X_{t_i} \in O_{t_i}]\cdot  B(n, X_n)\\
= &\, Y_n = Y_{n\wedge T}\,,
\end{aligned}
\end{equation*}
Combining all together, we have $\{Y_{n\wedge T}\}_{n\in \Nats}$ is a supermartingale. By optional stopping theorem and cond~\ref{cond:Nonnegativity}, \ref{cond:Initial} and \ref{cond:Safety} in OSBF, we have
\[
\begin{aligned}
    &\,\mathbb{P}_{\xx_0}\big(( \exists\,n,\,X_n \in U ) \wedge ( X_{t_i}\in  O_i \,, \forall i\leq k )\big)\\
    \leq& \, E[Y_T] \leq E[Y_0] \leq p.
\end{aligned}
\]
This completes the proof and establishes the desired upper bound.
\end{proof}

\medskip
\subsection{Proof of \cref{thm:conditional-safety}}
\conditionalsafety*
\begin{proof}
The result follows directly from the following equality established before: 
    \begin{align*}
    ~&\mathbb{P}_{\xx_0}\left(X_n \not\in U \text{ for all } n \in \Nats \mid  X_{t_i}\in  O_i \text{ for } i\leq k \right) \\
    =~& 1 -  \mathbb{P}_{\xx_0}\left(\exists\,n,\,X_n \in U  \mid  X_{t_i}\in  O_i \text{ for } i\leq k \right)\\
    =~& 1 - \frac{
        \mathbb{P}_{\xx_0}\big(( \exists\,n,\,X_n \in U ) \wedge ( X_{t_i}\in  O_i \,, \forall i\leq k )\big)
    }{
        \mathbb{P}_{\xx_0}\left( X_{t_i}\in  O_i,\, \forall i\leq k \right)
    }.
\end{align*}
This complete the proof.
\end{proof}

\medskip
\subsection{Proof of \cref{thm:ORBF_guarantee}}
\ORBFguarantee*
\begin{proof}
The proof is analogous to the proof of \cref{thm:osbf-safety-guarantee}. We will construct a supermartingale based on the ORBF, whose expected value at the stopping time for entering the unsafe and target set captures the $\mathrm{RA}(T,U)$ probability.
Formally, Let
$Y_n = \prod_{i < n} [X_i \in (\mathcal{X}\setminus U)] \prod_{t_i < n} [X_{t_i} \in O_{t_i}]\cdot  B(n, X_n),$
and let $T := \inf \{n \mid X_n \in U\cup T\}$, stopping time $T$ represents the first time the system enters unsafe or targe set. 

Follow the same argument as in proof of \cref{thm:osbf-safety-guarantee}, we have $\{Y_{n\wedge T}\}_{n\in \Nats}$ is a supermartingale.  By optional stopping theorem, we have
\[
\begin{aligned}
    &\,\mathbb{P}_{\xx_0}\big( \mathrm{RA}(T,\,U) \wedge (X_{t_i} \in O_i,\ \forall i\leq k) \big)\\
    \leq& \, E[Y_T] \leq E[Y_0] \leq p\,,
\end{aligned}
\]
where the first inequality holds because $B(n,X_n)$ is required to be positive over $T$, and greater than $1$ over $U$, as in cond~\ref{cond:Nonnegativity} and \ref{cond:Safety} in ORBF. This completes the proof and establishes the desired upper bound.
\end{proof}

\medskip
\subsection{Proof of \cref{thm:condtional-RA}}
\conditionalRA*
\begin{proof}
Since the system is guaranteed to enter either $U$ or $T$ eventually, the result follows directly from the following equality established before: 
\begin{align*}
    ~&\mathbb{P}_{\xx_0}\left(\,\mathrm{RA}(U,T) \mid  X_{t_i}\in  O_i \text{ for } i\leq k \,\right)  \\
    =~& 1 - \mathbb{P}_{\xx_0}\left(\,\mathrm{RA}(T,U) \mid  X_{t_i}\in  O_i \text{ for } i\leq k \,\right) \\
    =~& 1 - \frac{
        \mathbb{P}_{\xx_0}\big( \mathrm{RA}(T,U) \wedge  ( X_{t_i}\in  O_i \,, \forall i\leq k )
        \big)
    }{
        \mathbb{P}_{\xx_0}\left( X_{t_i}\in  O_i,\, \forall i\leq k \right)
    }.
\end{align*}
This complete the proof.
\end{proof}

\medskip
\subsection{Details on Computing OSBF and ORBF}
We show in this subsection how to encode the synthesis of $v(\xx)$ as sum-of-squares (SOS) programming problems~\cite{parrilo2003semidefinite}. SOS programming refers to convex programs with linear objectives and sum-of-squares-shaped constraints; they can be translated to semidefinite programming (SDP) problems~\cite{vandenberghe1996semidefinite} that admit polynomial-time algorithms implemented by many off-the-shelf SDP solvers. 
The SOS formulation of $v(\xx)$ relies on the following assumptions: the flow map $f(x,\pi(\xx),\theta)$ is polynomial in $x$; the sets $\mathcal{X}$, $I$, $U$, and $T$ are all semi-algebraic, i.e., they can all be translated into the form $\{\xx \mid \bigvee_i\bigwedge_j P_{ij}(\xx) \mathrel{\triangleright} 0\}$ with polynomials $P_{ij}$ and $\triangleright \in \{{\geq}, {>}\}$.

We start by creating a \emph{polynomial template} $v^{a}(\xx)$ in $\xx$ of certain degree $d$ with unknown parameters $a$ (encoding the vector of unknown coefficients). The synthesis of $v(\xx)$ amounts to finding an appropriate valuation of $a$ such that the following barrier conditions are fulfilled:
\begin{align*}
	\underset{\gamma}{\minimize}\quad &\gamma\,; \quad \text{(Safety case)} \\
	\subj\quad &v^a(\xx) \lleq \gamma, \quad \text{for } \xx \in I\,,\\
	&v^a(\xx) \ggeq 0, \quad \text{for } \xx \in \mathcal{X}\,,\\
        &v^a(\xx) \ggeq 1, \quad \text{for } \xx \in U\,,\\
	&E\left[v^a(f(\xx,\pi(\xx),\theta))\right] \leq v^a(\xx) , \; \text{for } \xx \in \mathcal{X}\setminus U .
\end{align*}%
or in reach-avoid case:
\begin{align*}
	\underset{\gamma}{\minimize}\quad &\gamma\,; \quad \text{(RA case)} \\
	\subj\quad &v^a(\xx) \lleq \gamma, \quad \text{for } \xx \in I\,,\\
	&v^a(\xx) \ggeq 0, \quad \text{for } \xx \in \mathcal{X}\,,\\
        &v^a(\xx) \ggeq 1, \quad \text{for } \xx \in U\,,\\
	E[v^a(&f(\xx,\pi(\xx),\theta))] \leq v^a(\xx) , \; \text{for } \xx \in \mathcal{X}\setminus (U \cup T) .
\end{align*}%

Observe that all the constraints above share a common form
\begin{align*}\label{eq:sdp_example}
	V^a(\xx) \ggeq 0 \ffor  \xx \,\in\, \left\{\xx \mathrel{\Big|} \bigvee\nolimits_{i = 0}^m\bigwedge\nolimits_{j = 0}^{l} P_{ij}(\xx) \mathrel{\triangleright} 0\right\},
\end{align*}%
i.e., $V^a(\cdot)$ is a parametrized polynomial that is non-negative over a semi-algebraic set. Based on the well-known Putinar's Positivstellensatz~\cite{putinar1993positive}, the above constraint can be reformulated into a group of SOS constraints:
\begin{align*}
	&V^a(\xx) + \sum_{j=0}^{l} s_{ij}(x)\cdot P_{ij}(x) ~\in~ \textrm{sos}[x], \quad \text{for } 0\leq i \leq m\,,\\
	&s_{ij} ~\in~ \textrm{sos}[x], \quad \text{for } 0\leq i \leq m,\, 0\leq j \leq l
\end{align*}%
where $\textrm{sos}[x] \defeq \{g(x)\in \Reals[x] \mid g= h_1^2 + h_2^2+ \dots + h_k^2\}$ denotes the set of all sum-of-squares polynomials in $x$ (over the reals). Consequently, the constraints on 
$v(x)$ can be encoded as SOS programming problems, which can then be solved using an off-the-shelf SOS/SDP solver.

\medskip
\subsection{Benchmarks in Experiment}
The stochastic term $\theta$ in the following benchmarks all obey the uniform distribution over interval $[-1,1]$. 

\begin{example}[\textnormal{\textsf{vanderpol1}~\cite{XZF22}}]
The system dynamic is: (safety, with dynamics illustrated in \cref{fig:vanderpol_equil})
    \begin{align*}
        x_{n+1} & \eeq x_n - 0.2 y_n \\
        y_{n+1} & \eeq y_n +0.2(x_n+0.5 y_n (x_n^2-1-1.7\theta))
    \end{align*}
    \begin{itemize}
        \item  $\mathcal{X} =~ \{(x,y)\in\RR^2 \mid x^2 + y^2 \le 0.5^2\}$.
        \item $U =~ \{(x,y)\in\RR^2 \mid 0.3-x \le 0\}$.
        \item  $I ~=~  \{(x,y)\in\RR^2 \mid (x+0.2)^2 + (y-0.2)^2 \le 0.05^2\}$.
    \end{itemize} 
\end{example}

\smallskip

\begin{example}[\textnormal{\textsf{vanderpol2}~\cite{XZF22}}]
The system dynamic is: (safety)
    \begin{align*}
        x_{n+1} & \eeq x_n - 0.2 y_n \\
        y_{n+1} & \eeq y_n +0.2(x_n+0.5 y_n (x_n^2-1-1.7\theta))
    \end{align*}
    \begin{itemize}
        \item  $\mathcal{X} =~ \{(x,y)\in\RR^2 \mid x^2 + y^2 \le 0.5^2\}$.
        \item $U =~ \{(x,y)\in\RR^2 \mid 0.4-x \le 0\}$.
        \item  $I ~=~  \{(x,y)\in\RR^2 \mid (x+0.25)^2 + (y-0.25)^2 \le 0.01^2\}$.
    \end{itemize}  
\end{example}

\smallskip

\begin{example}[\textnormal{\textsf{equil}~\cite{PJP07}}]
The system dynamic is: (reach-avoid, with dynamics illustrated in \cref{fig:vanderpol_equil})
    \begin{align*}
        x_{n+1} & \eeq x_{n} + 0.1(y_n+ \theta x_n)\\
        y_{n+1} & \eeq y_n   + 0.1(-x_n+\nicefrac{x_n^3}{3} - y_n)
    \end{align*}
    \begin{itemize}
        \item  $\mathcal{X} =~ \{(x,y)\in\RR^2 \mid x^2 +  y^2 \le 0.5^2\}$.
        \item $U =~ \{(x,y)\in\RR^2 \mid 0.34-x\le 0\}$.
        \item $T =~ \{(x,y)\in\RR^2 \mid x^2 + y^2 \le 0.1^2\}$.
        \item  $I ~=~  \{(x,y)\in\RR^2 \mid (x-0.2)^2 + (y-0.2)^2 \le 0.01^2\}$.
    \end{itemize}  
\end{example}

\smallskip

\begin{example}[\textnormal{\textsf{arch}~\cite{SGJ16}}]
The system dynamic is: (safety)
    \begin{align*}
        x_{n+1} & \eeq x_n + 0.1(x_n - x_n^3 + \theta y_n - x_ny_n^2)\\
        y_{n+1} & \eeq y_n + 0.1(-x_n + \theta y_n - x_n^2y_n -y_n^3)
    \end{align*}
    \begin{itemize}
        \item  $\mathcal{X} =~ \{(x,y)\in\RR^2 \mid x^2 + y^2 \le 1\}$.
        \item $U =~ \{(x,y)\in\RR^2 \mid x^2 + y^2 \le 0.04\}$.
        \item  $I ~=~  \{(x,y)\in\RR^2 \mid (x-1)^2 + (y-1)^2 \le 0.04\}$.
    \end{itemize} 
\end{example}

\smallskip

\begin{example}[\textnormal{\textsf{descent}}]
The system dynamic is: (reach-avoid)
    \begin{align*}
        x_{n+1} \eeq x_{n} + 0.2(\theta -1).
    \end{align*}
    \begin{itemize}
        \item  $\mathcal{X} =~ \{(x,y)\in\RR^2 \mid (x-1)^2  \le 16\}$.
        \item $U =~ \{(x,y)\in\RR^2 \mid (x-3)^2 \le 0.01\}$.
        \item $T =~ \{(x,y)\in\RR^2 \mid (x-0.2)^2 \le 0.01\} $
        \item  $I ~=~  \{(x,y)\in\RR^2 \mid (x-1)^2 \le 0.04\}$.
    \end{itemize} 
\end{example}

\smallskip

\begin{example}[\textnormal{\textsf{osc}~\cite{XZF22}}]
The system dynamic is: (safety)
    \begin{align*}
        x_{n+1} & \eeq x_{n} + 0.1 y_n\\
        y_{n+1} & \eeq y_n   + 0.1(-x_n - (1.6-2\theta)y_n)
    \end{align*}
    \begin{itemize}
        \item  $\mathcal{X} =~ \{(x,y)\in\RR^2 \mid x^2 + y^2 \le 16\}$.
        \item $U =~ \{(x,y)\in\RR^2 \mid x^2 + (y-2)^2 \le 1\}$.
        \item  $I ~=~  \{(x,y)\in\RR^2 \mid x^2 + (y-0.75)^2 \le 0.01\}$.
    \end{itemize} 
\end{example}

\smallskip

\begin{example}[\textnormal{\textsf{liederivative}~\cite{LZZ11}}]
The system dynamic is: (safety)
    \begin{align*}
        x_{n+1} & \eeq x_n - 0.2y_n\\
        y_{n+1} & \eeq y_n + 0.1(x_n^2 + \theta x_n)
    \end{align*}
    \begin{itemize}
        \item  $\mathcal{X} =~ \{(x,y)\in\RR^2 \mid x^2 + 4 y^2 \le 4\}$.
        \item $U =~ \{(x,y)\in\RR^2 \mid (x-0.5)^2 + (y - 0.75)^2 \le 0.05^2\}$.
        \item  $I ~=~  \{(x,y)\in\RR^2 \mid x^2 + (y+0.5)^2 \le 0.01^2\}$.
    \end{itemize}  
\end{example}

\smallskip

\begin{example}[\textnormal{\textsf{lyapunov}~\cite{RS10}}]
The system dynamic is: (safety)
    \begin{align*}
        x_{n+1} & \eeq x_n - 0.2\theta y_n\\
        y_{n+1} & \eeq y_n - 0.2\theta z_n\\
        z_{n+1} & \eeq z_n +0.1(-x_n - 2y_n - z_n +x_n^3)
    \end{align*}
    \begin{itemize}
        \item  $\mathcal{X} =~ \{(x,y)\in\RR^2 \mid x^2 +  y^2 + z^2 \le 4\}$.
        \item $U =~ \{(x,y)\in\RR^2 \mid (x-0.5)^2 + (y-0.5)^2 + (z - 0.5)^2\le 0.2^2\}$.
        \item  $I ~=~  \{(x,y)\in\RR^2 \mid (x-0.25)^2 + (y-0.25)^2 + (z-0.25)^2 \le 0.2^2\}$.
    \end{itemize}  
\end{example}

\smallskip

\begin{example}[\textnormal{\textsf{lotka}~\cite{GJP+14}}]
The system dynamic is: (safety)
    \begin{align*}
        x_{n+1} & \eeq x_n + 0.1x_n(\theta-z_n)\\
        y_{n+1} & \eeq y_n + 0.1y_n(1-2z_n)\\
        z_{n+1} & \eeq z_n +0.1z_n(x_n+y_n-1)
    \end{align*}
    \begin{itemize}
        \item  $\mathcal{X} =~ \{(x,y)\in\RR^2 \mid x^2 +  y^2 + z^2 \le 1\}$.
        \item $U =~ \{(x,y)\in\RR^2 \mid x^2 + y^2 \le 0.2^2\}$.
        \item  $I ~=~  \{(x,y)\in\RR^2 \mid (x-0.5)^2 + (y-0.5)^2 + z^2 \le 0.4^2\}$.
    \end{itemize}  
\end{example}

\end{document}